\newcommand{\affiliation}[1]{\address{#1}}
\DeclareMathOperator{\Tr}{Tr}
\newcommand\tD{\tau_{\mathrm{d}}}
\newcommand{\rme}{\mathrm{e}}
\newcommand{\rmi}{\mathrm{i}}
\newcommand{\T}{\mathrm{T}}
\newcommand{\U}{\mathrm{U}}
\newcommand{\OO}{\mathrm{O}}
\newcommand{\St}{Z}
\newcommand{\Tt}{T}
\newcommand{\widebar}[1]{\overline{#1}}
\newcommand{\wb}[1]{{\overline{#1}}}
\newcommand{\wt}[1]{{\widetilde{#1}}}
\newcommand{\bs}[1]{{\boldsymbol{#1}}}
\newcommand{\V}{v}
\newcommand{\E}{e}
\newcommand{\id}{{id}}
\newtheorem{theorem}{Theorem}[section]
\newtheorem{lemma}[theorem]{Lemma}
\theoremstyle{definition}
\newtheorem{definition}[theorem]{Definition}
\newtheorem{remark}[theorem]{Remark}
\newtheorem{example}[theorem]{Example}
\newtheorem{notation}[theorem]{Notation}
\begin{document}

\title[Combinatorial theory of the semiclassical evaluation of transport moments I]
{Combinatorial theory of the semiclassical evaluation of transport moments I: \\ Equivalence with the random matrix approach}

\author{G.~Berkolaiko} 
\affiliation{Department of Mathematics, Texas A\&M University, College
  Station, TX 77843-3368, USA}
\email{berko@math.tamu.edu}

\author{J.~Kuipers}
\affiliation{Institut f\"ur Theoretische Physik, Universit\"at Regensburg, D-93040
Regensburg, Germany}
\email{Jack.Kuipers@physik.uni-regensburg.de}

\begin{abstract}
  To study electronic transport through chaotic quantum dots, there
  are two main theoretical approachs.  One involves substituting the
  quantum system with a random scattering matrix and performing
  appropriate ensemble averaging. The other treats the transport in
  the semiclassical approximation and studies correlations among
  sets of classical trajectories.  There are established evaluation
  procedures within the semiclassical evaluation that, for several
  linear and non-linear transport moments to which they were applied,
  have always resulted in the agreement with random matrix
  predictions.  We prove that this agreement is universal: any
  semiclassical evaluation within the accepted procedures is
  equivalent to the evaluation within random matrix theory.  

  The equivalence is shown by developing a combinatorial
  interpretation of the trajectory sets as ribbon graphs (maps) with
  certain properties and exhibiting systematic cancellations among
  their contributions.  Remaining trajectory sets can be identified
  with primitive (palindromic) factorisations whose number gives the
  coefficients in the corresponding expansion of the moments of random
  matrices.  The equivalence is proved for systems with and
  without time reversal symmetry.

\end{abstract}

\maketitle

\section{Introduction}

Transport through a chaotic cavity is usually studied through a
scattering description.  For a chaotic cavity attached to two leads
with $N_1$ and $N_2$ channels respectively, the scattering matrix is
an $N\times N$ unitary matrix, where $N=N_1+N_2$.  It can be separated
into transmission and reflection subblocks
\begin{equation}
\label{scatmatpartseqn}
S(E) = \left(\begin{array}{cc}\bs{r}&\bs{t}' \\ \bs{t} & \bs{r}'\end{array}\right),
\end{equation}
which encode the dynamics of the system and the relation between the
incoming and outgoing wavefunctions in the leads.
The transport statistics of the cavity in question can now be
expressed in terms of the subblocks of $S(E)$.  For example, the 
conductance is proportional to the trace $\Tr \left[\bs{t}^\dagger \bs{t}\right]$ 
(Landauer-B\"uttiker formula \cite{Buttiker86,Landauer57,Landauer88}), while other physical
properties are expressible through higher moments like $\Tr \left[\bs{t}^\dagger \bs{t}\right]^n$.

There are two main approaches to studying the transport statistics in clean ballistic systems:
a random matrix theory (RMT) approach, which argues that $S$ can be viewed as a
random matrix from a suitable ensemble, and a semiclassical approach
that approximates elements of the matrix $S$ by sums over open scattering
trajectories through the cavity.

It was shown by Bl\"umel and Smilansky
\cite{BluSmi_prl88,BluSmi_prl90} that the scattering matrix of a
chaotic cavity is well modelled by the Dyson's circular ensemble of
random matrices of suitable symmetry.  Thus, transport properties of
chaotic cavities are often treated by replacing the scattering matrix
with a random one (see \cite{beenakker97} for a review).  Calculating
the averages over the appropriate random matrix ensemble is a very
active area with many different approaches.  A partial list of recent results
include the papers \cite{kss09,lv11,ms11,ms13,novaes08,ok08,ok09,ss06,ssw08,vv08}.  
In Section~\ref{sec:RMT} we review some basic facts about integration 
over random matrices.

On the other hand, the semiclassical approach makes use of the following
approximation for the scattering matrix elements
\cite{miller75,richter00,rs02}
\begin{equation} 
  \label{scatmatsemieqn}
  S_{oi}(E) \approx \frac{1}{\sqrt{N\tD}}\sum_{\gamma (i \to o)}
  A_{\gamma}(E)\rme^{\frac{\rmi}{\hbar}S_{\gamma}(E)} ,
\end{equation}
which involves the open trajectories $\gamma$ which start in channel
$i$ (for ``input'') and end in channel $o$ (for ``output''), with
their action $S_{\gamma}$ and stability amplitude $A_{\gamma}$.  The
prefactor also involves $\tD$ which is the average dwell time, or time
trajectories spend inside the cavity.  For transport moments one
considers quantities of the type
\begin{equation}
  \label{semitrajeqn}
  M_n(X) = \left\langle\Tr \left[X^{\dagger}X\right]^{n} \right\rangle_E 
  \sim \left\langle \frac{1}{{(N\tD)}^{n}}
    \sum_{{i_j,o_j}} 
    \sum_{\substack{\gamma_j(i_j\to o_j) \cr
        \gamma'_j (i_{j+1}\to o_{j})}} 
    \prod_{j=1}^{n} A_{\gamma_j}A_{\gamma'_j}^{*}
    \rme^{\frac{\rmi}{\hbar}(S_{\gamma_j}-S_{\gamma'_j})} \right\rangle,
\end{equation}
where the trace means we identify $i_{n+1}=i_1$ and where $X$ is
either the transmitting or the reflecting subblock of the scattering
matrix.  The averaging is performed over a window of energies $E$
which is classically small but semiclassically large:
the width $\Delta E$ of the window satisfies $\hbar/N\tD \ll \Delta E \ll E$.
Note that we use dagger to mean conjugate-transpose of a matrix and
star to denote complex conjugation.

The choice of the subblock $X$ affects the sums over the possible
incoming and outgoing channels, but not the trajectory structure which
involves $2n$ classical trajectories connecting channels.  Of these,
$n$ trajectories $\gamma_j$, $j=1,\ldots,n$, contribute with positive
action while $n$ trajectories $\gamma'_j$ contribute with negative
action.  In the semiclassical limit of $\hbar\to0$ we require that
these sums approximately cancel on the scale of $\hbar$ so that the corresponding
trajectories can contribute consistently when we apply the averaging
in \eqref{semitrajeqn}.

The semiclassical treatment will be reviewed in
Section~\ref{sec:semiclassics}.  The main idea of the treatment is
that, in order to achieve a small action difference, the trajectories
$\{\gamma'_j\}$, must follow the path of trajectories $\{\gamma_j\}$
most of the time, deviating only in small regions called
\emph{encounters}.  The topological configuration of encounters and
trajectories' stretches between them is described using a
\emph{diagram}.  The task of semiclassical evaluation thus splits into
two parts: evaluation of the contribution of a given diagram by
integrating over all possible trajectories of given structure and
enumerating all possible diagrams.

For the former task, a well established approximation emerged by
extending the pioneering work of Richter and Sieber \cite{rs02} by a
group of physicists based mainly in Essen: S.\ M\"uller, S.\ Heusler,
P.\ Braun and F.\ Haake, \cite{heusleretal06,mulleretal07}.  Henceforth
we refer to this approximation as the ``Essen ansatz''.  Roughly speaking,
it assigns to each diagram a weight which depends on the number and
type of encounters and the number of trajectory stretches between the
encounters.  The approximation is derived based on physically
justified assumptions; a mathematical derivation remains outside reach
even for the simplest chaotic systems.

This approach to quantum transport was founded on the earlier 
semiclassical treatment of the two-point correlator of spectral 
densities of closed systems.  There, correlations between pairs of 
periodic orbits \cite{berry85,ha84,mulleretal04,mulleretal05,sr01} or 
sets of pseudo-orbits \cite{heusleretal07,mulleretal09} could be shown 
to be responsible for the universal behaviour in line with RMT.  In fact 
a one-to-one mapping exists between the semiclassical diagrams and those 
that arise in a perturbative expansion of the $\sigma$ model and hence 
the equivalence between semiclassics and RMT established \cite{mulleretal09,mulleretal05}.
The full equivalence for arbitrary correlators is yet to be demonstrated however.

Returning to transport through open systems, within the Essen ansatz 
incremental progress has been made in evaluating various transport-related 
quantities, see for example
\cite{bhn08,bk10,bk11,braunetal06,heusleretal06,kuipersetal11,kuipersetal10,mulleretal07,rs02}.  
Interestingly, transport diagrams can be related to the closed periodic 
orbit pairs of spectral statistics by connecting the outgoing and incoming 
channels (possibly with additional encounters introduced).  Reversing the 
process, transport diagrams may be generated by cutting periodic orbits and this 
approach has been employed to derive the first two moments 
\cite{braunetal06,heusleretal06,mulleretal07}.  For higher moments 
this procedure becomes more involved, though it does lead to interesting 
combinatorial problems \cite{novaes12,novaes13}.  Alternatively, transport 
diagrams can be generated without recourse to periodic orbits 
leading to a perturbative expansion of moment generating functions for 
several orders in the parameter $1/N$ \cite{bhn08,bk11,bk13b}.  Here instead 
we develop a combinatorial approach to directly describe all transport diagrams.

For the transport moments that have been evaluated semiclassically, 
in every case where a RMT prediction was available it was found to be in
full agreement with the semiclassical evaluation.  This paper is devoted to 
proving a general theorem that implies that this will always remain the case: 
\emph{any} semiclassical evaluation within the Essen ansatz 
is \emph{equivalent} to the RMT evaluation.

Before we formulate our theorem, we note that the trace of any form
can be expanded as a sum of products of matrix elements.  For example,
the trace in equation \eqref{semitrajeqn} expands as
\begin{align}
  \label{eq:trace_expansion}
  \Tr \left[X^{\dagger}X\right]^{n}
  &= \sum_{i_j, o_j}  S^\dagger_{i_1,o_n} S_{o_n,i_n} \cdots
  S_{o_2,i_2} S^\dagger_{i_2,o_1}S_{o_1,i_1} \\
  &= \sum_{i_j, o_j}  S_{o_n,i_n} \cdots S_{o_2,i_2} S_{o_1,i_1}
  S^*_{o_n,i_1} \cdots
  S^*_{o_2,i_3} S^*_{o_1,i_2} \nonumber \\
  &= \sum_{i_j, o_j}  \St_{i_1,o_1} \St_{i_2,o_2} \cdots \St_{i_n,o_n}
  \St^*_{i_2,o_1} \St^*_{i_3,o_2}\cdots
  \St^*_{i_1,o_n} \nonumber
\end{align}
where $X$ is a sub-block of the scattering matrix $S$, the variables
$i_1,\ldots, i_n$ run over all columns of $S$ that appear in $X$ and
the variables $o_1,\ldots, o_n$ run over all rows of $S$ in $X$.  In
the last line we switched from the matrix $S$ to its transpose $\St=S^{\T}$ as
this will make further notation less confusing.  From
equation \eqref{eq:trace_expansion} it is clear that if we can
evaluate the averages of products of matrix elements of $S$ (or $\St$),
we can have access to every linear transport moment.  The same goes
for the nonlinear transport moments, i.e.\ averages of the form
$\langle \Tr \left[X^{\dagger}X\right]^{n_1} \Tr
\left[X^{\dagger}X\right]^{n_2} \cdots \rangle$.

\begin{theorem}
  \label{thm:main_equivalence}
  Within the Essen ansatz, the energy average of a product of the elements
  of the scattering matrix $S(E)$ (or $\St(E) = S^{\T}(E)$) coincides with
  the corresponding average in RMT,
  \begin{equation}
    \label{eq:main_equality}
    C_E(\boldsymbol{a},\boldsymbol{b}) \equiv
    \langle \St_{a_1a_\wb{1}}\ldots \St_{a_sa_\wb{s}} 
    \St^*_{b_1b_\wb{1}}\ldots \St^*_{b_{t}b_\wb{t}}
    \rangle_{E}
    = \langle U_{a_1a_\wb{1}}\ldots U_{a_sa_\wb{s}} 
    U^*_{b_1b_\wb{1}}\ldots U^*_{b_{t}b_\wb{t}}
    \rangle_{\mathrm{RMT}}.
 \end{equation}
 For scattering matrices with time-reversal symmetry the appropriate
 RMT average is over Circular Orthogonal Ensemble; without the
 symmetry the RMT average is over Circular Unitary Ensemble.
\end{theorem}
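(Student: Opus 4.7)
The plan is to reduce both sides of \eqref{eq:main_equality} to explicit sums over combinatorial objects and then exhibit a weight-preserving correspondence between them. On the RMT side, the Weingarten calculus expresses $\langle U_{a_1a_\wb{1}}\ldots U^*_{b_{t}b_\wb{t}}\rangle_{\mathrm{RMT}}$ as a sum over permutations (for CUE) or over pair partitions and hyperoctahedral double cosets (for COE) of products of Kronecker deltas in the index labels, weighted by a Weingarten coefficient. Both Weingarten functions have large-$N$ expansions in which the coefficients are indexed by factorisations of a fixed permutation into transpositions, with the \emph{primitive} (minimal, non-crossing or palindromic) factorisations providing the leading term in each symmetry class. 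The first step is to rewrite the right-hand side of \eqref{eq:main_equality} in this organised form, separating the purely index-theoretic delta factor from the scalar factorisation count and a power of $1/N$.

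On the semiclassical side, the starting point is the Essen ansatz reviewed in Section~\ref{sec:semiclassics}: each contribution to $C_E(\boldsymbol{a},\boldsymbol{b})$ is indexed by a diagram consisting of an encounter structure together with a pairing of trajectory ends to the labels $a_j$, $b_j$. I would encode such a diagram as a ribbon graph (map), with a vertex for every encounter, boundary vertices carrying the channel labels, and edges for trajectory stretches; the cyclic data at each vertex records how the $\gamma_j$ and $\gamma'_j$ strands enter and leave. The Essen weight then becomes purely combinatorial, depending on the numbers of vertices, edges and faces of the map, while the index dependence is captured by a product of deltas obtained by tracing the faces. With this reformulation both sides of \eqref{eq:main_equality} appear in the common shape (index delta factor) $\times$ (scalar combinatorial weight), summed over a class of ribbon graphs with prescribed boundary data.

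The heart of the proof is then a sign-reversing involution on semiclassical diagrams that kills all contributions except a distinguished family of primitive diagrams, which will be placed in bijection with the primitive factorisations indexing the RMT side. The natural candidate is a local move that merges or splits two adjacent encounters along a common stretch of trajectory: when carried out correctly the two configurations carry equal-magnitude, opposite-sign contributions under the Essen rules. The main obstacle, and the step I expect to be most delicate, is to make this involution globally well-defined: one has to specify a canonical order in which to attempt the merge/split (for instance, via a chosen traversal of the $\gamma$-strands starting at a marked label), show that it squares to the identity, and verify that the fixed-point diagrams are precisely those ribbon graphs whose face structure corresponds to a minimal factorisation, with the compensating power of $1/N$ supplied by the Euler characteristic of the underlying surface.

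Finally, the argument must be run in parallel for CUE and COE. In the unitary case, orientations of trajectories are fixed, the ribbon graphs are orientable, and the surviving fixed points directly match the non-crossing permutation factorisations of the CUE Weingarten expansion. In the orthogonal case, time-reversal symmetry allows $\gamma'$-strands to traverse a stretch in the opposite sense to $\gamma$-strands, giving rise to non-orientable maps and a richer family of encounters; the extension of the involution should recover Matsumoto-type palindromic factorisations matching the COE Weingarten function. The principal technical difficulty throughout is the bookkeeping of encounter signs and the interplay between strand reversals and the involution, and one must also verify that the $a_j, b_j$ index dependence picked out by the surviving diagrams is exactly the one produced by the Weingarten contractions, so that \eqref{eq:main_equality} holds term by term rather than only after summation.
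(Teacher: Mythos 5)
Your outline coincides with the paper's strategy (ribbon-graph encoding, a sign-reversing involution on diagrams, fixed points identified with factorisations, orientable maps for CUE versus non-orientable ones and palindromic factorisations for COE), but as it stands it has a genuine gap at both of its load-bearing points. First, the fixed points are mischaracterised: the surviving diagrams do \emph{not} correspond to minimal or non-crossing factorisations supplying only a leading term. They correspond to \emph{primitive (monotone) factorisations of every length}, $\tau=(s_1\,r_1)\cdots(s_\V\,r_\V)$ with $s_j<r_j$ and $s_j\le s_{j+1}$ (and their palindromic analogues in the COE case), and it is factorisations of arbitrary length --- e.g.\ $\id=(1\,2)^{2n}$ --- that generate the full $1/N$ series; keeping only minimal ones could at best reproduce the leading order, whereas the theorem is an identity to all orders in $1/N$. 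Second, the involution you propose (merge/split two adjacent encounters, chosen via some traversal from a marked label) is exactly the step you flag as delicate, and you give no mechanism that makes it well-defined or that pins down its fixed points. The paper's device here is an auxiliary \emph{tying/untying} operation on pairs of leaves, which multiplies the target permutation on the left by a transposition while the contract/split move leaves the target permutation unchanged; the algorithm repeatedly unties at the vertex adjacent to the minimal remaining leaf, performs a single contract-or-split when untying gets stuck, and then re-ties. This is what makes the map an involution preserving $\E-\V$ and flipping the sign $(-1)^\V$, and it simultaneously shows that the uncancelled diagrams are precisely the completely untieable ones, the recorded untyings giving exactly the monotonicity condition above. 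Without some substitute for this, your "canonical order" and your description of the fixed points are unsupported assertions.

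There is also a gap in the final matching with the RMT side. For CUE one could in principle cite Matsumoto--Novak to identify the monotone-factorisation counts with the $1/N$ coefficients of the Weingarten (class) coefficients, but for COE no such off-the-shelf "Matsumoto-type palindromic" result is available to quote; the paper closes both cases uniformly by deriving a recursion for the number of primitive (respectively palindromic primitive) factorisations, removing the leftmost transposition $(s_1\,r_1)$ and tracking how it splits or merges cycles of $\tau$, and checking that this recursion coincides with the Samuel recursion for $V^{\U}_N$ and the Brouwer--Beenakker recursion for $V^{\OO}_N$, including the extra terms (the factor $2c_j$ and the $c_1\,p^{\OO}_{\V-1}$ term) peculiar to the orthogonal case. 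You would also need the preliminary reduction (the analogue of the paper's Lemmas on the structure of the correlator) showing that the semiclassical contribution depends only on the target permutation $\tau$, so that both sides really do compare class by class; you gesture at this but it requires the relabelling argument to be carried out, particularly in the time-reversal case where relabellings may reverse trajectory direction.
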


We emphasize that equality~\eqref{eq:main_equality} is established to
all orders in $1/N$, the usual expansion parameter of the quantities
involved.

To prove Theorem~\ref{thm:main_equivalence} we will put the
diagrammatic method of the semiclassical approximation on a more
rigorous footing, carefully describing diagrams as ribbon graphs with
certain colorability properties.  We define operations on the diagrams
that lead to cancellations and, eventually, to the equivalence with
the RMT calculation.

We note that a preliminary outline of this theorem was published in
\cite{bk12}.  While in the process of writing up, we were notified by
Marcel Novaes that he achieved a similar breakthrough although only
for systems without time-reversal symmetry.  His original approach 
uses a combinatorial identity that remains unproved \cite{novaes12,novaes13}, 
but a new approach provides the complete equivalence using 
a semiclassical matrix model \cite{novaes13b}.

It is important to mention that Theorem~\ref{thm:main_equivalence}
does not produce any new formulae for moments such as $M_n(X)$, it
only establishes the equivalence of the two existing approaches to
their evaluation.  However, the description of the diagrams that we
develop in the proof can be analyzed further to yield new results.
This is done in the second part of this work \cite{bk13b}, where we
formalize the semiclassical evaluation of moments, re-cast it as a
summation over factorizations of given permutations, and thus
calculate $M_n(X)$ for any $n$ to several orders in the small
parameter $1/N$.

The layout of the paper is as follows: in sections \ref{sec:RMT} and
\ref{sec:semiclassics} we review the relevant notions and results from
random matrix theory and semiclassical approximation,
correspondingly.  Section~\ref{sec:restrictions} discusses general
properties of the sets of trajectories contributing on the
semiclassical side.  In section~\ref{sec:unitary} and
\ref{sec:orthogonal} we define and study the diagrams that classify
contributing sets of orbits and relate them to the random matrix
expansions for Circular Unitary and Orthogonal Ensembles,
correspondingly.  We endeavored to make the paper readable to a wide
variety of audiences, providing numerous examples and figures to
illustrate the discussed concepts.

\section{RMT prediction}
\label{sec:RMT}

It has been argued \cite{BluSmi_prl88,BluSmi_prl90} that scattering
through a chaotic cavity is described by a unitary matrix from an
appropriate random matrix ensemble.  In the absence of Time Reversal
Symmetry (TRS) the Circular Unitary Ensemble (CUE) is used.  If TRS 
is present the appropriate ensemble is the Circular Orthogonal Ensemble (COE).
The final classical symmetry class involves particles with spin $\frac{1}{2}$, 
for which breaking spin-rotation symmetry through spin-orbit interactions while 
retaining TRS leads to the Circular Symplectic Ensemble (CSE).

The CUE is the unitary group $U(N)$ endowed with
the Haar measure.  The averages of products of the elements of
matrices $U\in U(N)$ have been studied in
\cite{bb96,Mel_jpa90,Sam_jmp80,Wei_jmp78} among other works.  The main
result is
\begin{equation}
  \label{eq:avU}
  \langle U_{a_1a_\wb{1}}\ldots U_{a_sa_\wb{s}} 
  U^*_{b_1b_\wb{1}}\ldots U^*_{b_tb_\wb{t}} \rangle_{\mathrm{CUE}(N)} = 
  \delta_{t,s} \sum_{\sigma,\pi\in S_t} V^{\U}_N(\sigma^{-1}\pi)
  \prod_{k=1}^{t} \delta\!\left({a_k}-{b_{\sigma(k)}}\right) 
  \, \delta\!\left({a_\wb{k}}-{b_\wb{\pi(k)}}\right),
\end{equation}
where $S_t$ is the symmetric group of permutations of the set
$\{1,\ldots, t\}$, $\delta_{k,n} = \delta(k-n)$ is the Kronecker delta
(the latter notation is used solely to avoid nesting subindices) and
the coefficient $V^{\U}_N(\sigma^{-1}\pi)$ depends only on the lengths
of cycles in the cycle expansion of $\sigma^{-1}\pi$, i.e.\ on the
conjugacy class of the permutation $\sigma^{-1}\pi$.  For this reason
we will refer to $V^{\U}_N$ as the CUE class
coefficients.\footnote{Another name present in the literature is
  ``Weingarten'' function \cite{Wei_jmp78}, even though it was
  probably Samuel \cite{Sam_jmp80} who first defined the function and
  systematically studied it.}

The COE is the ensemble of \emph{unitary symmetric
  matrices}\footnote{Thus, despite the word ``orthogonal'' in the
  name, it is not the orthogonal group $O(N)$.  Rather, it can be
  identified with $U(N)/O(N)$.} with a probability distribution
obtained from the CUE through the mapping $W = UU^{\T}$, where $U$ is
a unitary matrix from the CUE and $U^{\T}$ its transpose.  The result
analogous to \eqref{eq:avU} reads \cite{bb96} (see also
\cite{MelSel_npa80}),
\begin{equation}
  \label{eq:avO}
 \langle W_{a_1a_\wb{1}}\ldots W_{a_sa_\wb{s}} 
  W^*_{b_1b_\wb{1}}\ldots W^*_{b_tb_\wb{t}} \rangle_{\mathrm{COE}(N)} = 
  \delta_{t,s} \sum_{\pi\in S_{2t}} V^{\OO}_N(\pi)
  \prod_{z\in Z_t}  \delta({a_z}-{b_{\pi(z)}}),
\end{equation}
where $\pi$ is a permutation on the set $Z_t = \{1,\ldots, t, \wb{t}, \ldots, \wb{1}\}$.

We mention that averaging formulae similar to \eqref{eq:avU} and
\eqref{eq:avO} have recently become available for a much bigger variety of
random matrix ensembles, see \cite{Mat_prep13} and references therein.


Both types of class coefficients used above can be calculated recursively.
Namely, the class coefficients $V^{\U}_N$ were derived by Samuel
\cite{Sam_jmp80} to satisfy $V^{\U}_N(\emptyset) = 1$ and
\begin{multline}
  \label{eq:recur_CUE}
  N V_N^{\U}(c_1, \ldots, c_k) + \sum_{p+q=c_1} V_N^{\U}(p,q,c_2,\ldots, c_k)
  + \sum_{j=2}^{k} c_j V_N^{\U}(c_1+c_j,\ldots,\hat{c_j},\ldots, c_k) \\
  = \delta_{c_1,1} V_N^{\U}(c_2,\ldots, c_k) .
\end{multline}
Here $c_1,\ldots, c_k$ are the lengths of the cycles in the cycle expansion
of $\sigma^{-1}\pi$.  The notation $\hat{c_j}$ means that the element
$c_j$ has been removed from the list.  Finally, $\delta_{c_1,1}$ is
the Kronecker delta.

The corresponding recursion relation for the COE class coefficients
were derived by Brouwer and Beenakker \cite{bb96}.
They represent the permutation $\pi$ in equation \eqref{eq:avO} as the
product
\begin{equation}
  \label{eq:even_odd_rep}
  \pi = \Tt' \pi_e \pi_o \Tt'',
\end{equation}
where $\Tt'$ and $\Tt''$ are some involutions satisfying $\Tt^{\cdot}(j) =
j$ or $\wb{j}$, $\pi_o$ is a permutation on the set $\{1, \ldots, t\}$
and $\pi_e$ is a permutation on the set $\{\wb{t}, \ldots, \wb{1}\}$.
Note that factorization \eqref{eq:even_odd_rep} may be non-unique.
What is unique is the cycle structure of the permutation on $\{1,
\ldots, t\}$ defined by $\tau = \Tt \pi_e^{-1} \Tt \pi_o$, where
$\Tt=\left(1\wb{1}\right)\cdots\left(t\wb{t}\right)$ is used to ``cast''
the permutation $\pi_e$ into a permutation acting on $\{1, \ldots,
t\}$.

It turns out that the class coefficients $V^{\OO}_N(\pi)$ depend only on
the cycle type of the permutation $\tau$ defined above.  They satisfy
the recursion
\begin{multline}
  \label{eq:recur_COE}
  (N+c_1) V_N^{\OO}(c_1, \ldots, c_k) + \sum_{p+q=c_1} V_N^{\OO}(p,q,c_2,\ldots, c_k)
  + 2\sum_{j=2}^{k} c_j V_N^{\OO}(c_1+c_j,\ldots,\hat{c_j},\ldots, c_k) \\
  = \delta_{c_1,1} V_N^{\OO}(c_2,\ldots, c_k), 
\end{multline}
with the initial condition $V^{\OO}_N(\emptyset) = 1$.

In Lemma~\ref{lem:even_odd_and_orth_target} in Section~\ref{sec:target_orthogonal} 
we will give a simpler prescription for identifying the partition $c_1,\ldots,c_k$
which corresponds to a given $\pi$, bypassing the representation of
$\pi$ as the product in \eqref{eq:even_odd_rep}.

There are also expansions of $V^{\U}_N(\pi)$ in inverse powers of $N$
with coefficients expressed in terms of the number of factorizations
of $\pi$ of various types: primitive factorizations
\cite{MatNov_fpsac10}, inequivalent factorization \cite{BerIrv_prep},
and general factorizations \cite{Col_imrn03}.  It is the primitive
factorizations, discussed by Matsumoto and Novak
\cite{MatNov_fpsac10}, that will be particularly important to us.  We
will give an alternative proof of their result in
Section~\ref{sec:cancel_unit} and will extend it to the COE case in
Section~\ref{sec:cancel_orth}.


\section{Semiclassical approximation}
\label{sec:semiclassics}

In this section we review the physical approximations involved in
evaluating correlations of the type
\begin{equation}
  \label{eq:correlator}
  C_E(\boldsymbol{a}, \boldsymbol{b}) = \langle \St_{a_1a_\wb{1}} \cdots \St_{a_sa_{\bar{s}}} 
  \St^*_{b_1b_\wb{1}} \cdots \St^*_{b_tb_{\bar t}} \rangle_{E}.
\end{equation}
First one employs the semiclassical approximation \cite{miller75,richter00,rs02} 
from equation \eqref{scatmatsemieqn}, leading to the expression
\begin{equation}
  \label{eq:correlator_semicl1}
  C_E(\boldsymbol{a}, \boldsymbol{b}) 
  = \left\langle \frac{1}{{(N\tD)}^{(s+t)/2}}
    \sum_{\substack{\gamma_j(a_{j}\to a_{\bar j}) \\
      \gamma'_k (b_{k} \to b_{\bar k} ) }}
    \prod_{j=1}^{s} A_{\gamma_j}\rme^{\frac{\rmi}{\hbar}S_{\gamma_j}(E)}
    \prod_{k=1}^{t}A_{\gamma'_k}^{*}\rme^{-\frac{\rmi}{\hbar}S_{\gamma'_k}(E)} 
  \right\rangle,
\end{equation}
here $N \tD$ is equal to the Heisenberg time. 

In the second step, which we will call the ``coinciding pathways
approximation'', it is argued that since $\hbar$ is very small, the
energy average is a sum of oscillatory integrals, so we first look 
for sets of trajectories that can have a small phase
\begin{equation*}
  \sum_{j} S_{\gamma_j}(E)-\sum_{k} S_{\gamma'_k}(E) \lesssim \hbar.
\end{equation*}
and contribute consistently in the semiclassical limit 
\cite{Sie02,sr01,Spe_jpa03,TurRic_jpa03}.  For each configuration 
of trajectories considered in the coinciding pathway approximation, 
in a final step the contribution is evaluated using further semiclassical 
approximations involving integrals over the full range of phases.

For each configuration considered, the union of paths of $\gamma_j$ in the phase space
must be almost identical to the union of paths $\gamma_k'$.  At this
point, TRS starts to play a role: if it is broken then the paths of 
$\gamma_k'$ must be traversed in the same direction, while in the 
presence of TRS the direction of traversal becomes irrelevant.

Let us consider the case $s=t=1$ for simplicity.  The easiest way to
achieve a small action difference is to let $\gamma = \gamma'$.  This
is known as the ``diagonal approximation'', pioneered for closed
systems by Berry \cite{berry85} using the sum rule of Hannay and 
Ozorio de Almeida \cite{ha84}, and for open systems that we consider
here by Bl\"umel and Smilansky \cite{BluSmi_prl88}.  However, it was
observed in \cite{bjs93a} that the diagonal approximation in some cases
fails to predict even the leading order contribution correctly.

\begin{figure}[t]
  \includegraphics{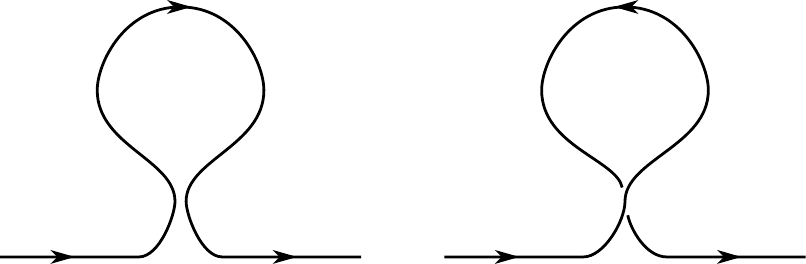}
  \caption{A schematic depiction of a pair of trajectories that provide the
    first off-diagonal contribution.  This pair requires TRS, 
    since the loop is traversed by the trajectories in opposite directions.} 
  \label{fig:SRpair}
\end{figure}

From analogy to disordered systems it was believed that the next
off-diagonal correction would come from trajectories $\gamma$ that
nearly intersect themselves, thus having a loop.  The partner
trajectory $\gamma'$ would run along $\gamma$ until the
self-intersection, then traverse the loop in the direction
\emph{opposite} to $\gamma$, and in the final part it would run along
$\gamma$ again.  This configuration requires TRS and is schematically depicted 
in Fig.~\ref{fig:SRpair}.  These general ideas were given analytical form 
in a breakthrough work by Richter and Sieber \cite{rs02} 
(see also \cite{Sie02,sr01}), who calculated the 
correction from the diagrams shown in Fig.~\ref{fig:SRpair}.  
This correction was evaluated by expressing the action 
difference, or phase, in terms of the angle of 
intersection of $\gamma'$ weighted by its ergodic average of 
occurrence and approximated by a stationary phase integral.  This
development paved the way for calculating higher order corrections and
higher order moments.   For example, one of the simplest configurations with
broken TRS is presented in Fig.~\ref{fig:NTRpair}.  Before we proceed, however, 
it is important to mention that the trajectory $\gamma$ can (and typically does) have
more than one ``near intersection''.  The diagram of
Fig.~\ref{fig:SRpair} is meant to represent the unique
near-intersection at which the trajectories $\gamma$ and $\gamma'$
go in different directions.

\begin{figure}[t]
  \includegraphics{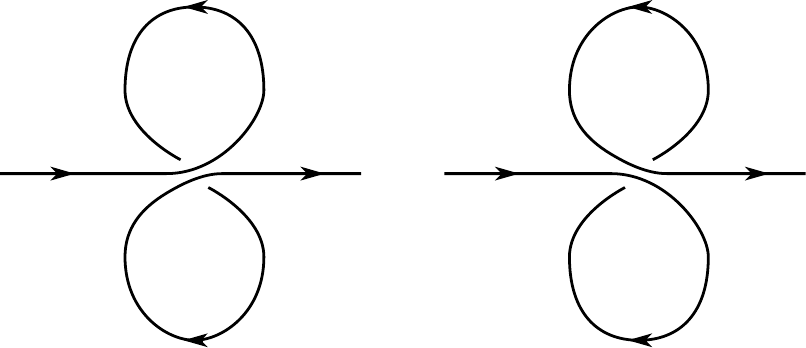}
  \caption{A simple pair of trajectories that do not require TRS: 
    both loops are traversed by both trajectories in the
    same direction, but in a different order.}
  \label{fig:NTRpair}
\end{figure}

The general idea is to split the set of paths into regions of two
types: ``links'' (or ``stretches'') that are traversed by exactly one
$\gamma_j$ and exactly one $\gamma_k'$ and ``encounters'' where
multiple stretches meet and the trajectories $\gamma$ interconnect
differently from the trajectories $\gamma'$.  Using additional assumptions
that all stretches are long and the sum over all
possible paths a stretch can take (between two given endpoints) is
well approximated by the ergodic average, M\"uller, Heusler,
Braun and Haake \cite{mulleretal07} formulated rules for
evaluating the contribution of a given topological arrangement of
trajectories.  This is the third major approximation involved in 
the Essen ansatz.  

\begin{definition}
  \label{def:Essen_ansatz}
  The total contribution of all pairs $\{\gamma_j\}$ and $\{\gamma_k'\}$
  with a given topological arrangement of links and encounters is
  given by a product where
  \begin{itemize}
  \item every link provides a factor $1/N$,
  \item every encounter of $2l$ stretches (an $l$-encounter) gives a
    factor of $-N$, 
  \item encounters that happen in the lead do not count (give a factor
    of $1$).
  \end{itemize}
\end{definition}

With these approximations the problem of evaluating any transport
statistic is reduced to the (hard) problem of counting all distinct topological
arrangements of links and encounters, so-called ``diagrams'', with
their respective weights calculated according to the rules above.

The rules of Definition~\ref{def:Essen_ansatz} ensure that the
contribution to a given correlator $C_E(\boldsymbol{a},
\boldsymbol{b})$ at a given order of $1/N$ comes from a finite number
of topological arrangements of trajectories.  To prove
Theorem~\ref{thm:main_equivalence} we will show that the sum over the
arrangements reproduces the RMT result at \emph{every} order of $1/N$.


\section{Restrictions on matrix coefficients}
\label{sec:restrictions}

Within the coinciding pathways approximation, links of the
trajectories $\{\gamma_j\}$ should be in a one-to-one correspondence to
links of the trajectories $\{\gamma_k'\}$.  For the links starting
or ending in channels in the lead, this implies restriction on which
correlators from \eqref{eq:correlator} can be non-zero.  Here we
describe these restrictions, first for the systems with broken TRS
(the unitary case) as it is simpler and then for systems with TRS (the
orthogonal case).

\subsection{The unitary case}
\label{sec:target_unitary}

For the unitary case with broken TRS, assume
we have a suitable configuration of trajectories $\gamma$ and
$\gamma'$, contributing to $C_E(\boldsymbol{a}, \boldsymbol{b})$.  We start
from channel $a_1$ and follow the trajectory $\gamma_1$.  The final
stretch of $\gamma_1$ leads to the channel $a_{\wb{1}}$.  With the coinciding pathways approximation, 
the same stretch is traversed by a $\gamma'$ trajectory, which we denote by
$\gamma_{\pi(1)}'$.  The trajectory $\gamma_{\pi(1)}'$, by definition,
goes between channels $b_{\pi(1)}$ and $b_{\wb{\pi(1)}}$.  The final
stretch of $\gamma_1$ must also be final for $\gamma_{\pi(1)}'$ (since it is leading to a
channel and not an encounter and since the trajectories must run in
the same direction).  We immediately conclude that
\begin{equation}
  \label{eq:pi-def}
  a_{\wb{1}} = b_{\wb{\pi(1)}}.
\end{equation}
We follow the trajectory $\gamma_{\pi(1)}'$ backwards,
until we are on it's first stretch, about to hit channel number
$b_{\pi(1)}$.  The partner of $\gamma_{\pi(1)}'$ on this
stretch is a $\gamma$ trajectory, which we will denote
$\gamma_{\tau(1)}$.  We
can now conclude that
\begin{equation}
  \label{eq:same_i_channels}
  a_{\tau(1)} = b_{\pi(1)}.
\end{equation}

We can now follow trajectory $\gamma_2$, finding the value of $\pi(2)$
and $\tau(2)$ etcetera.  It is clear that the thus defined functions
$\pi$ and $\tau$ are permutations.  Since the number of end-points of
the trajectories $\gamma$ needs to be the same as the number of
end-points of the trajectories $\gamma'$, we immediately get $s=t$ in
Eq.~(\ref{eq:correlator}).  Henceforth we denote this common value by
$n$.

Further, the permutation $\pi$ imposes $n$ restrictions on the sets
$\boldsymbol{a}$ and $\boldsymbol{b}$, namely,
\begin{equation*}
  a_{\bar j} = b_{\wb{\pi(j)}}, \qquad j=1\ldots n.
\end{equation*}
Letting $\sigma = \pi \tau^{-1}$ we can transform identities similar
to \eqref{eq:same_i_channels} into
\begin{equation*}
 a_{j} = b_{\sigma(j)},  \qquad j=1\ldots n.
\end{equation*}
The permutation $\tau$ defined above and calculated from $\sigma$ and
$\pi$ as $\tau = \sigma^{-1}\pi$ will be called the ``target
permutation''.

We can summarize our discussion as a lemma.

\begin{lemma}
  \label{lem:structure_of_correlator}
  Let $\boldsymbol{a} \in \mathbb{N}^s$, $\boldsymbol{b} \in \mathbb{N}^t$.
  Within the ``coinciding pathways approximation'', $C_E^{\U}(\boldsymbol{a},\boldsymbol{b})$
  is zero unless $s=t$ and there exist permutations $\pi, \sigma \in
  S_{t}$ such that
  \begin{equation}
    \label{eq:ab_relations}
    a_{j} = b_{\sigma(j)} \qquad \mbox{and} \qquad 
    a_{\wb{j}} = b_{\wb{\pi(j)}}, \qquad j=1\ldots t.
  \end{equation}
  Moreover, if two pairs, $(\pi_1,\sigma_1)$ and $(\pi_2,\sigma_2)$,
  both fulfill \eqref{eq:ab_relations} and have the same target
  permutation $\tau=\sigma_1^{-1}\pi_1 = \sigma_2^{-1}\pi_2$, then
  their contributions to $C_E^{\U}(\boldsymbol{a},\boldsymbol{b})$ are identical.  In
  other words,
  \begin{equation}
    \label{eq:correlator_sum_target_perm}
    C_E^{\U}(\boldsymbol{a},\boldsymbol{b}) = \delta_{t,s} \sum_{\sigma,\pi \in
      S_t} \Delta^{\U}(\sigma^{-1}\pi) \prod_{j=1}^{t} \delta\!\left({a_j}-{b_{\sigma(j)}}\right) 
    \, \delta\!\left({a_\wb{j}}-{b_\wb{\pi(j)}}\right),
  \end{equation}
  where $\Delta^{\U}(\sigma^{-1}\pi)$ is the total contribution of
  trajectories $\gamma$ and $\gamma'$ whose ends satisfy
  \eqref{eq:ab_relations}.
\end{lemma}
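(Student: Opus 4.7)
The plan is to build on the iterative argument already sketched in the paragraphs preceding the statement and then add the observation that the Essen weight is insensitive to relabelings of the primed trajectories. First, given any contributing trajectory configuration $(\{\gamma_j\},\{\gamma'_k\})$ I would formalise the tracing procedure as follows. The final stretch of $\gamma_j$ is shared, in the same direction (TRS is broken), with a unique $\gamma'$, which I name $\gamma'_{\pi(j)}$; reading off the lead labels at the two ends of this stretch yields $a_{\wb{j}} = b_{\wb{\pi(j)}}$. Following the primed trajectory backward to its initial stretch and identifying the partnering unprimed trajectory defines $\tau(j)$ and gives $a_{\tau(j)} = b_{\pi(j)}$. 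Injectivity in either direction forces $\pi,\tau\in S_t$, and counting endpoints of $\gamma$ versus $\gamma'$ forces $s=t$. Setting $\sigma = \pi\tau^{-1}$ then gives precisely \eqref{eq:ab_relations}.

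For the moreover part, the hypothesis $\sigma_1^{-1}\pi_1 = \sigma_2^{-1}\pi_2 = \tau$ makes the element $\rho := \sigma_2\sigma_1^{-1} = \pi_2\pi_1^{-1}$ a well-defined permutation in $S_t$ (the two expressions agree because $\pi_i = \sigma_i\tau$). I would use $\rho$ to construct a bijection between configurations compatible with $(\pi_1,\sigma_1)$ and those compatible with $(\pi_2,\sigma_2)$ by relabeling $\gamma'_k \mapsto \gamma'_{\rho(k)}$. The channel assignments \eqref{eq:ab_relations} are respected because $b_{\sigma_2(j)} = a_j = b_{\sigma_1(j)}$ and similarly for the barred indices, so $\rho$ merely permutes identically-valued lead labels. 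Under this bijection, the underlying network of stretches and encounters, with its lead endpoints, is literally unchanged as an unlabelled structure; only the indices attached to the primed pieces are permuted. Since Definition \ref{def:Essen_ansatz} assigns a weight that depends exclusively on the numbers of links and of encounters of each size, and not at all on which $\gamma'_k$ passes through which piece, the bijection is weight-preserving. Summing the weights on either side then yields the same number, which I denote $\Delta^{\U}(\tau)$.

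The representation \eqref{eq:correlator_sum_target_perm} follows by partitioning the semiclassical sum in \eqref{eq:correlator_semicl1} according to the pair $(\pi,\sigma)$ read off from each configuration, enforcing compatibility with $\boldsymbol{a},\boldsymbol{b}$ by Kronecker deltas, and assigning to each compatible pair the common weight $\Delta^{\U}(\sigma^{-1}\pi)$. The only step I expect to require real care is the second: pinning down exactly what is meant by the "topological arrangement of links and encounters" and showing that the relabeling by $\rho$ really does leave this structure invariant. The natural way to do this cleanly is to view the arrangement as a combinatorial object whose data is intrinsic (a graph with labelled leads and typed vertices) and in which the trajectory identifiers are auxiliary bookkeeping; the ribbon-graph formalism to be introduced in the subsequent sections is precisely tailored to encode this, so once those definitions are in place the invariance under $\rho$ reduces to inspection.
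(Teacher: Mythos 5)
Your proposal is correct and follows essentially the same route as the paper: the first part is the tracing argument given before the lemma, and the "moreover" part is proved by relabelling the primed trajectories with the permutation $\rho=\pi_2\pi_1^{-1}$ (which, as you note, equals $\sigma_2\sigma_1^{-1}$ precisely because the target permutations coincide) and checking that the relabelled configuration satisfies the relations for $(\pi_2,\sigma_2)$. Your additional remark that the Essen weight is label-independent makes explicit what the paper leaves implicit, but it is not a different argument.
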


\begin{remark}
  The permutations $\sigma$ and $\pi$ have the same role as those
  appearing on the right-hand side of equation~\eqref{eq:avU}.
\end{remark}

\begin{proof}
  What remains to be proved in Lemma~\ref{lem:structure_of_correlator} is
  that the contributions of $(\pi_1,\sigma_1)$ and $(\pi_2,\sigma_2)$
  are identical.  We will exhibit a one-to-one correspondence between
  the set of trajectories with ends satisfying
  \begin{equation}
    \label{eq:ab_relations1}
    a_{j} = b_{\sigma_1(j)} \qquad \mbox{and} \qquad 
    a_{\bar j} = b_{\widebar{\pi_1(j)}}, \qquad j=1\ldots n
  \end{equation}
  and trajectories with ends satisfying
  \begin{equation}
    \label{eq:ab_relations2}
    a_{j} = b_{\sigma_2(j)} \qquad \mbox{and} \qquad 
    a_{\bar j} = b_{\widebar{\pi_2(j)}}, \qquad j=1\ldots n.
  \end{equation}
  To do so we simply relabel the trajectories $\gamma'$, so that the
  trajectory $\gamma_j'$ becomes the trajectory
  $\widetilde{\gamma}'_{\pi_2\pi_1^{-1}(j)}$.  Then the second
  identity in \eqref{eq:ab_relations1} becomes
  \begin{equation*}
    a_{\bar j} = b_{\widebar{\pi_2\pi_1^{-1}\pi_1(j)}} = b_{\widebar{\pi_2(j)}},
  \end{equation*}
  and the first one
  \begin{equation*}
    a_{j} = b_{\pi_2\pi_1^{-1}\sigma_1(j)} =
    b_{\pi_2\pi_2^{-1}\sigma_2(j)} = b_{\sigma_2(j)},
  \end{equation*}
  where the equality of the target permutations was used.
\end{proof}

\begin{example}
  Consider the correlator
 \begin{equation*}
    \langle \St_{1,2} \St^*_{2,1} \rangle_E, \qquad \mbox{i.e.}\quad
    a_1=1,\ a_\wb{1}=2, \quad b_1 = 2, \ b_\wb{1} = 1.
  \end{equation*}
  For a system with broken time reversal symmetry this correlator is
  zero.  Indeed, there are no permutations $\sigma$ and $\pi$ (on 1
  element) that can satisfy equations \eqref{eq:ab_relations}.
\end{example}

\begin{example}
  \label{ex:3traj}
  Consider the correlator
  \begin{equation*}
    \langle \St_{1,2} \St_{3,4} \St_{5,6} \, \St^*_{3,6} \St^*_{5,4} \St^*_{1,2}\rangle_E.
  \end{equation*}
  The trajectories run
  \begin{align*}
    &\gamma_1:1\to 2, & &\gamma_2:3\to4, & &\gamma_3:5\to6,\\
    &\gamma_1':3\to6, & &\gamma_2':5\to4, & &\gamma_3':1\to2,
  \end{align*}
  and the variables $a$ and $b$ have values
  \begin{align*}
    &a_1 = 1,& &a_{\wb{1}}=2,& &a_2 = 3,& &a_{\wb{2}}=4,& 
    &a_3 = 5,& &a_{\wb{3}}=6, \\
    &b_1 = 3,& &b_{\wb{1}}=6,& &b_2 = 5,& &b_{\wb{2}}=4,& 
    &b_3 = 1,& &b_{\wb{3}}=2.
  \end{align*}
  Two examples of trajectory configurations contributing to the
  correlator above are given in Fig.~\ref{fig:NTRtraj_adv}.  Note that
  the start and end of each trajectory (unfilled circles) are labelled
  by the index of the corresponding variable $a_j$ or $a_{\bar{j}}$
  and not by its value.

 From equation~\eqref{eq:ab_relations} it is clear that the only
  choice for $\sigma$ and $\tau$ is (in the two-line notation)
  \begin{equation*}
    \sigma = 
    \begin{pmatrix}
      1 & 2 & 3\\
      3 & 1 & 2
    \end{pmatrix},
    \qquad
    \pi = 
    \begin{pmatrix}
      1 & 2 & 3\\
      3 & 2 & 1
    \end{pmatrix}.
  \end{equation*}
  Therefore, the target permutation is $\tau = (1)(2\,3)$ (in the
  cycle notation).  The target permutation can be read off the diagram
  (see Fig.~\ref{fig:NTRtraj_adv}) by starting at $j$, then following
  the $\gamma$ trajectory until its end, then following the $\gamma'$
  trajectory in reverse to its start.  The label there is the image
  of $j$ under the action of $\tau$.
\end{example}

\begin{figure}[t]
  \includegraphics[scale=0.7]{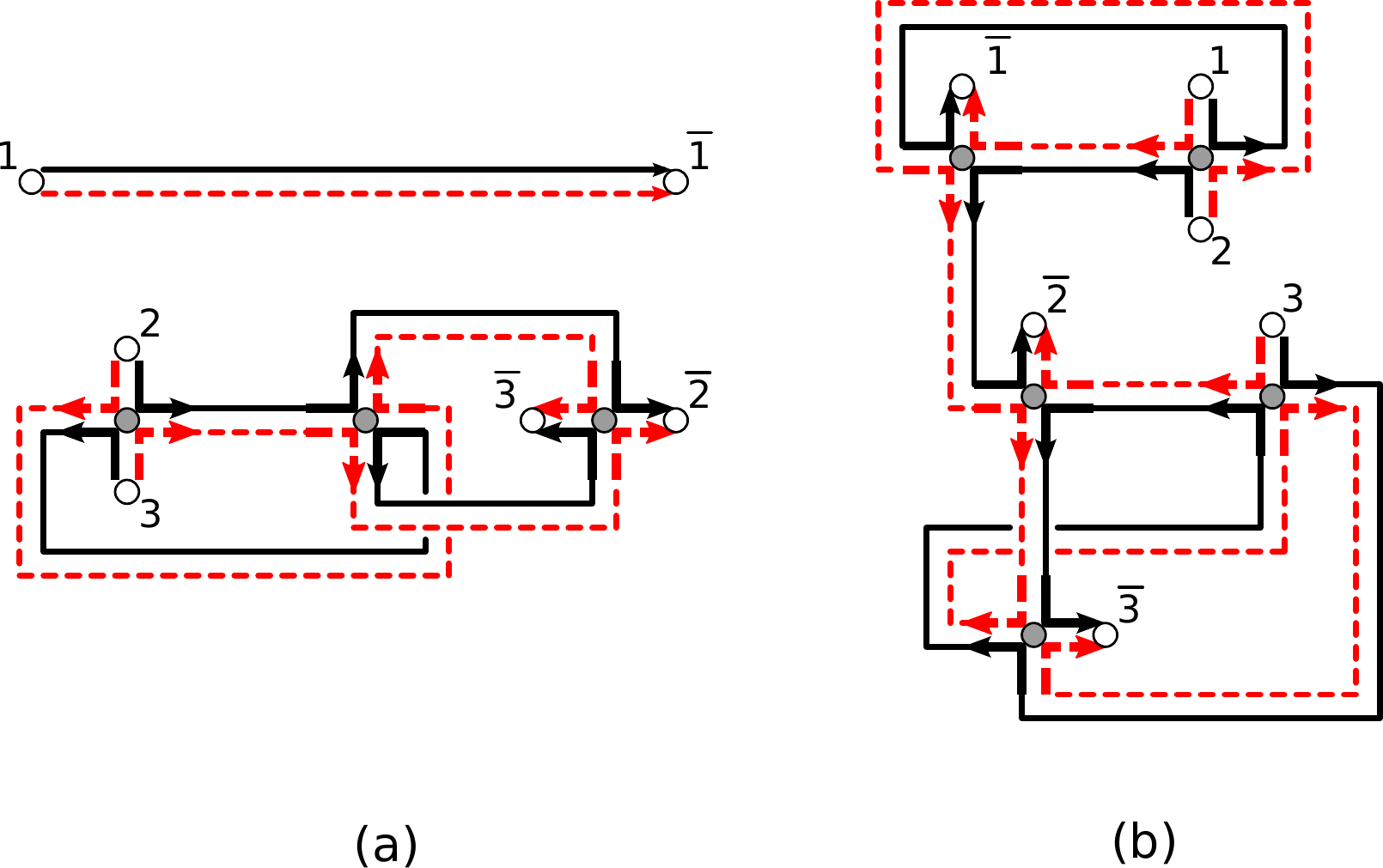}
  \caption{Two examples of configurations of trajectories that
    contribute to the correlator in Example~\ref{ex:3traj}.  The
    endpoints of trajectories are labelled by the indices of the
    variables $a_1,\ldots,a_3$ and $a_\wb{1},\ldots,a_\wb{3}$.  The
    trajectories $\gamma_j$ are drawn in black solid
    lines; they go from $j$ to $\wb{j}$.  The trajectories $\gamma_j'$
    are drawn in red dashed line and go from $\tau(j)$ to $\wb{j}$.}
  \label{fig:NTRtraj_adv}
\end{figure}

\begin{example}
  For the correlator
  \begin{equation*}
    \langle \St_{1,2} \St_{3,2} \, \St^*_{1,2} \St^*_{3,2}\rangle_E
  \end{equation*}
  there are two choices of the pair $(\sigma, \pi)$.  The mapping
  $\sigma$ between the first indices is the identity permutation $id$,
  while the mapping between the second indices can either be $id$ or
  $(1\,2)$.  Therefore the target permutation $\tau$ is
  \begin{equation*}
    \tau = id \qquad \mbox{or} \qquad \tau=(1\,2).
  \end{equation*}
\end{example}

\begin{remark}
  \label{rem:moment_correlator}
  Consider the correlator of the type
  \begin{equation*}
    \left\langle \St_{i_1,o_1} \St_{i_2,o_2} \cdots \St_{i_n,o_n}
    \St^*_{i_2,o_1} \St^*_{i_3,o_2} \cdots \St^*_{i_1,o_n} \right\rangle,
  \end{equation*}
  which arise in evaluation of moments \eqref{eq:trace_expansion}.
  Irrespective of the choices of the values for the indices
  $i_1,\ldots o_n$, one can always let
  \begin{equation*}
    \sigma = (n \ldots 2\, 1), \qquad \pi = id.
  \end{equation*}
  leading to $\tau$ being the grand cycle, $\tau=(1\,2\ldots n)$.  In
  previous papers, starting with \cite{bhn08}, the diagrams realizing
  this choice of $\tau$ were considered as the base contribution.
  Correcting factors were used to take care of other target
  permutations, arising, for example, when $i_j = i_k$ for some $j$
  and $k$.  It was argued that other target permutations corresponded
  to encounters happening in a lead (which essentially removes the
  encounter from the diagram, see the rules of the Essen ansatz).  An
  encounter in the base diagram that \emph{could} be moved into the
  lead can be seen as ``untying'' the encounter.  We consider this
  interpretation in detail in the second half of this work
  \cite{bk13b} since it provides an easier way to obtain answers for
  the moment generating functions.

  For the proof of our main theorem, however, it is more convenient to
  have each diagram representing one target permutation and to sum
  over the target permutation, as is already done on the RMT side.  
  The work by Novaes \cite{novaes12,novaes13} adopts a similar strategy.
\end{remark}

\subsection{The orthogonal case}
\label{sec:target_orthogonal}

In the presence of TRS there are more possibilities to match
parts of the trajectories $\gamma$ to $\gamma'$, since now the
``head'' can be matched with the ``tail''.

More precisely, the channel $a_1$ lies at the start of the trajectory
$\gamma_1$ but must also lie on a trajectory $\gamma_j'$.  Therefore
$a_1$ must coincide either with $b_{j}$ or $b_\wb{j}$.
In other words,
\begin{equation*}
  a_1 = b_{\varpi(1)},
\end{equation*}
where $\varpi(1) \in \{1, \ldots, t, \wb{t}, \ldots, \wb{1}\}$.
Similarly for all other channels $a_j$ there is a matching channel
$b_{\varpi(j)}$, where $j$ can be $1,\ldots, t$ or
$\wb{t},\ldots,\wb{1}$.  This defines the permutation $\varpi$ on $2t$
symbols $Z_t = \{1, \ldots, t, \wb{t}, \ldots, \wb{1}\}$.  Thus the
only restriction on the indices $a_j$ and $b_j$ is that they are
equal as multi-sets (i.e. contain the same elements the same number of
times).

\begin{notation}
  \label{not:labels}
  In the description of orthogonal trajectories, we adopt the
  convention that $j$ or $k$ refers to the variable label that does
  not have the bar (correspondingly, $\wb{j}$ is a label that does
  have the bar), while $z$ denotes a label either with or without the
  bar.
\end{notation}

To understand the analogue of the \emph{target permutation} in the
orthogonal case, we take another look at the unitary case.  The target
permutation was $\tau = \sigma^{-1} \pi$.  However, it could be argued
that $\pi$ and $\sigma$ act on different spaces: $\sigma$ acts on the
elements $\{1,\ldots, t\}$ and $\pi$ acts on $\{\wb{1},\ldots,
\wb{t}\}$.  To multiply the permutations we need to map them onto the
same space, for example using the mapping $\Tt: j \mapsto \wb{j}$.
Then the ``correct'' expression for the target permutation is $\tau =
\sigma^{-1} \Tt^{-1} \pi \Tt$.

This correction, somewhat superfluous in the unitary case, becomes a
necessity in the orthogonal case.  The meaning of the mapping $\Tt$ is
``propagation along the trajectory'' $\gamma$ or $\gamma'$.  We define
it as a permutation on $Z_t$:
\begin{equation*}
  \Tt = (1\,\wb{1})\cdots(t\,\wb{t}).
\end{equation*}
In particular, $\Tt$ is an involution, i.e. $\Tt^{-1}=\Tt$.  Moreover, in
the orthogonal case, the permutation $\varpi$ plays the role of $\pi$
and $\sigma$ \emph{combined}.  We thus define the target permutation
$\tau$ by 
\begin{equation}
  \label{eq:orth_target}
  \tau = \varpi^{-1} T^{-1} \varpi T.
\end{equation}
It acts on the ends of trajectories $\gamma$ in the following fashion:
take an end, propagate along $\gamma$ to the other end, find the
corresponding end of a trajectory $\gamma'$, propagate to its other
end.  This is the corresponding end of the next $\gamma$-trajectory.

\begin{example}
  \label{ex:TR_corr}
  Consider the correlator $\langle \St_{12} \St_{34} \St_{56} \,
  \St^{*}_{54} \St^{*}_{31} \St^{*}_{62} \rangle$.  The permutation
  $\varpi$ (in two-row notation) is
  \begin{equation*}
    \varpi =
    \begin{pmatrix}
      1 & \wb{1} & 2 & \wb{2} & 3 & \wb{3}\\
      \wb{2} & \wb{3} & 2 & \wb{1} & 1 & 3
    \end{pmatrix}.    
  \end{equation*}
  Then the target permutation is $\tau =
  (1\,\wb{3}\,\wb{2})(2\,3\,\wb{1})$.  
  Two examples of trajectory configurations contributing to the
  correlator above are given in Fig.~\ref{fig:TRtraj_adv}
\end{example}

\begin{example}
  \label{ex:TR_corr_2}
  Consider the correlator $\langle \St_{12} \St_{31} \, \St^{*}_{23}
  \St^{*}_{11} \rangle$.  There are two possibilities for the
  permutation $\varphi$, namely
  \begin{equation*}
    \varpi =
    \begin{pmatrix}
      1 & \wb{1} & 2 & \wb{2} \\
      2 & 1 & \wb{1} & \wb{2}
    \end{pmatrix}
    \qquad\mbox{and}\qquad
    \varpi =
    \begin{pmatrix}
      1 & \wb{1} & 2 & \wb{2} \\
      \wb{2} & 1 & \wb{1} & 2
    \end{pmatrix}.
  \end{equation*}
  Both of these correspond to the same target permutation
  $\tau=(1\,2)(\wb2\,\wb1)$.
\end{example}

Note the ``palindromic'' symmetry of the results of
Examples~\ref{ex:TR_corr} and \ref{ex:TR_corr_2}.  This observation is
made precise in the following Lemma.

\begin{figure}
  \includegraphics[scale=0.7]{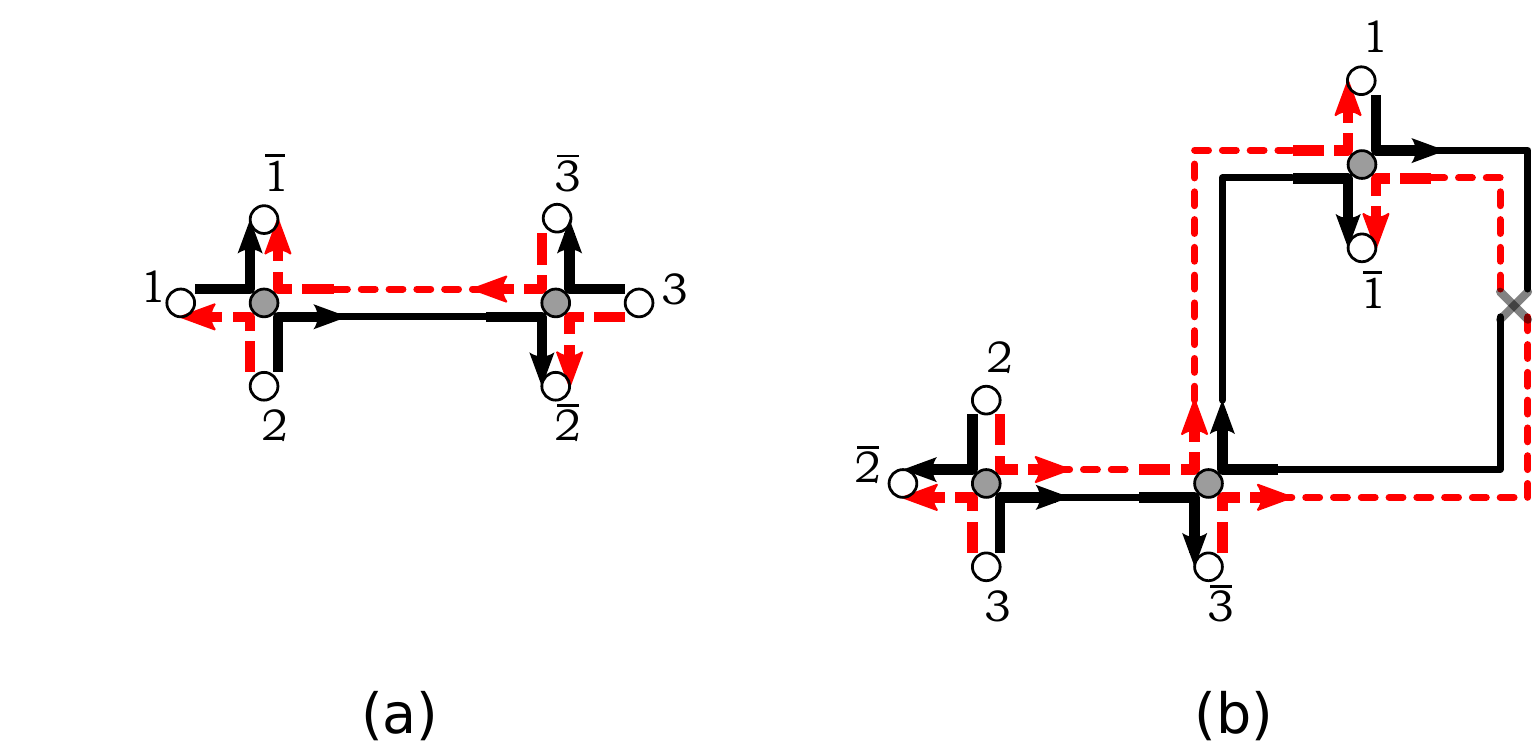}
  \caption{Two examples of trajectory configurations contributing to
    the correlator of Example~\ref{ex:TR_corr}.  In the second
    diagram, to draw the trajectories on the plane we had to allow
    them to cross.}
  \label{fig:TRtraj_adv}
\end{figure}

\begin{lemma}
  \label{lem:orth_target_prop}
  The orthogonal target permutation has the following properties:
  \begin{enumerate}
  \item \label{enum:sym} if $\tau(x) = y$ then $\tau(\wb{y}) = \wb{x}$,
  \item \label{enum:distinct} $\tau(x) \neq \wb{x}$.
 \end{enumerate}
  Therefore the cycles on $\tau$ come in symmetric pairs: for every
  cycle $(z_1\, z_2 \, z_3 \ldots)$ there is the \emph{distinct} partner
  cycle $(\ldots\, \wb{z_3} \, \wb{z_2} \, \wb{z_1})$.
\end{lemma}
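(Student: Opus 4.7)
The plan is to first establish the conjugation identity $T\tau T = \tau^{-1}$. This follows in one line from the definition $\tau = \varpi^{-1} T \varpi T$ together with $T^2 = \mathrm{id}$: conjugating by $T$ cancels the outer $T$'s of $\tau$, leaving $T\varpi^{-1}T\varpi$, which is manifestly the inverse of $\varpi^{-1}T\varpi T$. Equivalently, $\tau\circ T = T\circ \tau^{-1}$.

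From this identity, property~(\ref{enum:sym}) is immediate: if $\tau(x) = y$, then $\tau(\wb y) = \tau(Ty) = T\tau^{-1}(y) = T(x) = \wb x$. Property~(\ref{enum:distinct}) follows by contradiction. The equation $\tau(x) = \wb x$ expands to $\varpi^{-1}T\varpi T(x) = T(x)$; applying $\varpi$ to both sides reveals that $w := \varpi(\wb x)$ is fixed by $T$, which contradicts the fact that $T = (1\,\wb 1)\cdots(t\,\wb t)$ has no fixed points.

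For the pairing of cycles, iterating property~(\ref{enum:sym}) around a cycle $(z_1, z_2, \ldots, z_k)$ of $\tau$ immediately produces $(\wb{z_k}, \wb{z_{k-1}}, \ldots, \wb{z_1})$ as another cycle. The delicate point is distinctness. Suppose the two cycles coincide as cyclic sequences. Then their common underlying set is $T$-invariant, so $T$ acts on it as a fixed-point-free involution and $k$ must be even. Property~(\ref{enum:sym}) then forces the index involution $\phi$ defined by $\wb{z_i} = z_{\phi(i)}$ to satisfy $\phi(i+1) \equiv \phi(i) - 1 \pmod k$, so $\phi(i) \equiv c - i \pmod k$ for some constant $c$. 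A short parity analysis shows that such a reflection on $\mathbb{Z}/k\mathbb{Z}$ either has a genuine fixed point (giving $z_i = \wb{z_i}$, impossible) or swaps two adjacent indices $i, i+1$ (giving $\tau(z_i) = z_{i+1} = \wb{z_i}$, violating property~(\ref{enum:distinct})). Either outcome is a contradiction.

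The main obstacle is this distinctness step. The conjugation identity and the two pointwise properties are essentially direct unpackings of the definition of $\tau$, but ruling out self-partnered cycles requires the combinatorial observation that a reflection on $\mathbb{Z}/k\mathbb{Z}$ with $k$ even cannot simultaneously avoid fixed points and avoid swapping an adjacent pair — an obstruction that uses both previously established properties in tandem.
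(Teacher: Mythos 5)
Your proof is correct and follows essentially the same route as the paper: properties (\ref{enum:sym}) and (\ref{enum:distinct}) are obtained by unwinding the definition $\tau=\varpi^{-1}T^{-1}\varpi T$ (your conjugation identity $T\tau T=\tau^{-1}$ is a compact repackaging of the paper's pointwise computation, cf.\ Remark~\ref{rem:Ttau_involution}), and the cycle pairing is read off by iterating property (\ref{enum:sym}). The only difference is that you spell out the distinctness of a cycle from its partner via the reflection/parity analysis on $\mathbb{Z}/k\mathbb{Z}$, a step the paper dispatches with the bare assertion that it ``follows from the second property''; your elaboration is consistent with that claim and simply fills in the detail.
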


\begin{proof}
  Denote $\varpi(\wb{x}) =: z$.  Then we have $y = \tau(x) =
  \varpi^{-1}T^{-1} \varpi(\wb{x}) = \varpi^{-1}(\wb{z})$.  Therefore $\varpi(y) =
  \wb{z}$.

  Now we can calculate $\tau(\wb{y}) = \varpi^{-1}T^{-1} \varpi(y) =
  \varpi^{-1}(z) = \wb{x}$.

  For the second property, we assume the contrary: $\tau(x) =
  \wb{x}$.  Then $\varpi^{-1} T^{-1} \varpi(\wb{x}) = \wb{x}$ or 
  the contradiction
  \begin{equation*}
    \varpi(\wb{x}) = \wb{ \varpi(\wb{x}) }.
  \end{equation*}

  The conclusion now follows.  Indeed the first property implies that
  for every cycle $(z_1\, z_2 \, z_3 \ldots)$ there is the
  cycle $(\ldots\, \wb{z_3} \, \wb{z_2} \, \wb{z_1})$.  The fact that
  these are not the same cycle follows from the second property.
\end{proof}

\begin{remark}
  \label{rem:Ttau_involution}
  Property \ref{enum:sym} of Lemma~\ref{lem:orth_target_prop} can be
  written as $\tau(\Tt \tau(x)) = \wb{x}$ or, equivalently, $\Tt\tau
  \Tt\tau = id$.  In other words, $\Tt\tau$ is an involution.  Property
  \ref{enum:distinct} is equivalent to $\Tt\tau(x) \neq x$.  Therefore,
  Lemma~\ref{lem:orth_target_prop} can be reformulated as saying that
  $\Tt\tau$ consists only of cycles of length 2.
\end{remark}

\begin{remark}
  An equivalent way to express the symmetry of $\tau$ is via the
  identity $\tau( x ) = \wb{ \tau^{-1}(\wb{x}) }$.  This is similar to
  the definition of the \emph{hyperoctahedral group} that is the
  subgroup of $S_{2t}$ with elements satisfying $\chi(x) =
  \wb{\chi(\wb{x})}$.  However, our target permutations are different,
  since they do not form a group.
\end{remark}

There is a close connection between the target permutation $\tau$ and
the cycle structure appearing as the true parameters of the class
coefficients $V^{\OO}_N$, see (\ref{eq:avO}).

\begin{lemma}
  \label{lem:even_odd_and_orth_target}
  Let the permutation $\varpi$ be represented as the product $\Tt'
  \pi_o \pi_e \Tt''$, where $\Tt'$ and $\Tt''$ are some involutions
  that contain only cycles of the form $(j)(\wb{j})$ or
  $(j\,\wb{j})$, $\pi_o$ is a permutation on the set $\{1, \ldots,
  t\}$ and $\pi_e$ is a permutation on the set $\{\wb{t}, \ldots,
  \wb{1}\}$, see Section~\ref{sec:RMT}.

  Then the cycle type of $\tau = \varpi^{-1} \Tt^{-1} \varpi \Tt$ is twice
  that of the permutation $\Tt \pi_e^{-1} \Tt \pi_o$.  Consequently, the
  latter cycle type does not depend on the choice of the representation.
\end{lemma}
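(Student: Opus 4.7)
The plan is to use conjugation to reduce to the case $\Tt' = \Tt'' = \id$, and then to compute the reduced target permutation on each of the two halves of $Z_t$ explicitly.

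First I would verify that both $\Tt'$ and $\Tt''$ commute with $\Tt$. Indeed, the only cycles of $\Tt'$ are $(j)(\wb{j})$ or $(j\,\wb{j})$, each of which preserves the pair $\{j,\wb{j}\}$ setwise, as does $\Tt$; on each such pair the two permutations either coincide or one of them is the identity, so they commute there, and hence $\Tt'\Tt = \Tt\Tt'$ globally (the same argument applies to $\Tt''$). Combined with $\Tt'^2 = \Tt''^2 = \id$, this yields $\Tt'\Tt\Tt' = \Tt$ and therefore
\begin{equation*}
  \tau = \varpi^{-1}\Tt\varpi\Tt = \Tt''\tilde\varpi^{-1}(\Tt'\Tt\Tt')\tilde\varpi\Tt''\Tt = \Tt''\bigl(\tilde\varpi^{-1}\Tt\tilde\varpi\Tt\bigr)\Tt''^{-1},
\end{equation*}
where $\tilde\varpi := \pi_o\pi_e$. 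So $\tau$ is $\Tt''$-conjugate to the reduced target $\tilde\tau := \tilde\varpi^{-1}\Tt\tilde\varpi\Tt$, and in particular the two permutations share the same cycle type.

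Next I would exploit the fact that $\pi_o$ and $\pi_e$ have disjoint supports and therefore commute. Writing $\pi^{(o)},\pi^{(e)}$ for the corresponding permutations of $\{1,\ldots,t\}$ under the identification $j\leftrightarrow\wb{j}$, a direct chase through the definitions gives
\begin{equation*}
  \tilde\tau(j) = (\pi^{(o)})^{-1}\pi^{(e)}(j), \qquad \tilde\tau(\wb{j}) = \wb{(\pi^{(e)})^{-1}\pi^{(o)}(j)}.
\end{equation*}
Hence $\tilde\tau$ preserves the barred/unbarred split of $Z_t$, and its two restrictions are mutually inverse permutations of a $t$-element set, so they have a common cycle type. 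A separate (easier) calculation shows that $\Tt\pi_e^{-1}\Tt$ acts on $\{1,\ldots,t\}$ as $(\pi^{(e)})^{-1}$, whence $\Tt\pi_e^{-1}\Tt\pi_o = (\pi^{(e)})^{-1}\pi^{(o)}$, which is exactly the cycle type of either half of $\tilde\tau$.

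Combining these observations, the cycle type of $\tau$ is the disjoint union of two copies of the cycle type of $\Tt\pi_e^{-1}\Tt\pi_o$, which is the main claim. The ``consequently'' clause is then immediate: $\tau$ is determined by $\varpi$ and $\Tt$ alone, with no reference to the decomposition \eqref{eq:even_odd_rep}, so the cycle type of $\Tt\pi_e^{-1}\Tt\pi_o$ cannot depend on which valid choice of $(\Tt',\pi_o,\pi_e,\Tt'')$ is made. The only step that requires genuine care is the commutativity of $\Tt$ with $\Tt'$ and $\Tt''$, where the specific cycle structure imposed on the latter is precisely what is needed; everything else is routine symbol manipulation.
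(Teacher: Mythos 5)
Your proof is correct and follows essentially the same route as the paper: both arguments use the commutativity of $\Tt'$, $\Tt''$ with $\Tt$ to strip these involutions off by conjugation, and then observe that the reduced permutation splits over the two halves of $Z_t$ into mutually inverse (hence equal-cycle-type) factors, one of which is identified with $\Tt\pi_e^{-1}\Tt\pi_o$. The only difference is cosmetic: you verify the splitting by an element-by-element chase, whereas the paper does it by a one-line algebraic rearrangement into the product $\left(\Tt \pi_e^{-1} \Tt \pi_o\right)\left(\Tt \pi_o^{-1} \Tt \pi_e\right)$.
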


\begin{proof}
  Substituting the representation $\varpi = \Tt'
  \pi_o \pi_e \Tt''$ into the definition of $\tau$ and noting that the
  involutions $\Tt'$, $\Tt''$ and $\Tt$ commute, we get
  \begin{align*}
    \tau = \Tt \left(\Tt'' \pi_e^{-1} \pi_o^{-1} \Tt'\right) \Tt \left(\Tt' \pi_o \pi_e \Tt''\right) 
    = \Tt'' \left(\Tt  \pi_e^{-1} \Tt\right) \left(\Tt \pi_o^{-1} \Tt\right) \pi_o \pi_e \, \Tt''
    \sim \left(\Tt \pi_e^{-1} \Tt \pi_o\right) \left(\Tt \pi_o^{-1} \Tt \pi_e\right), 
  \end{align*}
  where $\sim$ denotes conjugate permutations.  Obviously, the two
  parts of the last expression act on disjoint sets $\{1, \ldots, t\}$
  and $\{\wb{t}, \ldots, \wb{1}\}$ and have the same cycle type, which
  explains the doubling.
\end{proof}

A result analogous to Lemma~\ref{lem:structure_of_correlator}
summarizes the discussion of this section.

\begin{lemma}
  \label{lem:structure_of_correlator_O}
  Let $\boldsymbol{a} \in \mathbb{N}^s$, $\boldsymbol{b} \in \mathbb{N}^t$.  Within
  the ``coinciding pathways approximation'', $C_E^{\OO}(\boldsymbol{a},\boldsymbol{b})$
  is zero unless $s=t$ and there exists a permutation $\varpi$ on
  $Z_t = \{1,\ldots, t, \wb{t}, \ldots, \wb{1}, \}$ such that
  \begin{equation}
    \label{eq:ab_relations_O}
    a_{z} = b_{\varpi(z)}, \qquad z \in Z_t.
  \end{equation}
  Moreover, if two permutations, $\varpi_1$ and $\varpi_2$,
  both fulfill \eqref{eq:ab_relations_O} and have the same target
  permutation $\tau=\varpi_1^{-1} \Tt^{-1} \varpi_1 \Tt
  =\varpi_2^{-1} \Tt^{-1} \varpi_2 \Tt$, then
  their contributions to $C_E^{\OO}(\boldsymbol{a},\boldsymbol{b})$ are identical.  In
  other words,
  \begin{equation}
    \label{eq:correlator_sum_target_perm_O}
    C_E^{\OO}(\boldsymbol{a},\boldsymbol{b}) = \delta_{t,s} \sum_{\varpi \in
      S_{2t}} \Delta^{\OO}(\tau) \prod_{z\in Z_t}^{t} \delta\!\left({a_z}-{b_{\varpi(z)}}\right),
  \end{equation}
  where $\Delta^{\OO}(\tau)$ is the total contribution of trajectories
  $\gamma$ and $\gamma'$ whose ends satisfy \eqref{eq:ab_relations_O}.
\end{lemma}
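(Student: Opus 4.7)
The proof mirrors that of Lemma~\ref{lem:structure_of_correlator} with the additional ingredient of time-reversal. The equality $s=t$ and the existence of a permutation $\varpi \in S_{2t}$ realising $a_z = b_{\varpi(z)}$ are direct consequences of the endpoint-matching discussion preceding the lemma: under TRS, each end of a $\gamma$-trajectory must coincide with some end of a $\gamma'$-trajectory, and all four pairings (head-head, head-tail, tail-head, tail-tail) are now permitted. What requires argument is that two permutations $\varpi_1, \varpi_2$ yielding the same target $\tau$ produce the same contribution to $C_E^{\OO}$.

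The crucial algebraic fact is
\begin{equation*}
    \varpi_1^{-1} \Tt^{-1} \varpi_1 \Tt \;=\; \varpi_2^{-1} \Tt^{-1} \varpi_2 \Tt
    \qquad\Longleftrightarrow\qquad
    \rho\, \Tt \;=\; \Tt\, \rho,\quad\text{where } \rho := \varpi_2 \varpi_1^{-1},
\end{equation*}
which is verified by a few direct rearrangements using $\Tt^{-1}=\Tt$. Since $\Tt$ is the product of the transpositions $(j\,\wb j)$, a permutation $\rho$ commuting with $\Tt$ is precisely one that preserves the partition of $Z_t$ into pairs $\{j,\wb j\}$: for each pair, $\rho$ maps it to another pair $\{k,\wb k\}$, either sending $j\mapsto k$, $\wb j\mapsto \wb k$ (bar-preserving) or $j\mapsto \wb k$, $\wb j\mapsto k$ (bar-swapping).

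Given a trajectory configuration satisfying $a_z = b_{\varpi_1(z)}$, I would construct its $\varpi_2$-counterpart in the spirit of the unitary proof by relabelling the trajectories $\gamma'$, now with an extra twist: in each bar-preserving pair, rename $\gamma'_j$ as $\wt\gamma'_{\rho(j)}$ with its original orientation; in each bar-swapping pair, rename $\gamma'_j$ as the time-reverse of $\wt\gamma'_{\rho(\wb j)}$. A substitution essentially identical to the one carried out in the unitary case then confirms that the new configuration satisfies $a_z = b_{\varpi_2(z)}$ for every $z\in Z_t$.

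The main obstacle, and the reason TRS is indispensable here, is showing that the relabelling (with possible reversals) preserves the semiclassical weight of Definition~\ref{def:Essen_ansatz}. Under TRS the time-reverse of a classical trajectory is itself a legitimate classical trajectory with identical action and stability amplitude, and the underlying topological arrangement of links and encounters in phase space is unaffected by relabelling or by reversing individual trajectories (at an encounter, stretches may be traversed in either direction without altering the valence structure). Since the Essen ansatz assigns weights purely on the basis of the numbers and valences of links and encounters, the contribution is unchanged, and collecting terms by $\varpi$ yields \eqref{eq:correlator_sum_target_perm_O}.
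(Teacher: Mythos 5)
Your proposal is correct and takes essentially the same approach as the paper: you relabel the $\gamma'$ trajectories by $\rho=\varpi_2\varpi_1^{-1}$ (the paper uses the inverse, $\rho=\varpi_1\varpi_2^{-1}$), and the commutation $\rho\Tt=\Tt\rho$, which you show is equivalent to the equality of the target permutations, is exactly the paper's verification that the relabelling (with direction reversal allowed by TRS) is legitimate. Your final paragraph on the invariance of actions, stability amplitudes and the Essen-ansatz weight under time reversal is left implicit in the paper but fully consistent with it.
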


\begin{proof}
  As before, the only part we need to prove is that the contributions
  of trajectories satisfying (\ref{eq:ab_relations_O}) with $\varpi_1$
  and $\varpi_2$ are identical if the corresponding target
  permutations coincide.  Starting with a set of trajectories
  described by $\varpi_1$ we will relabel them so that they will be
  described by $\varpi_2$.  We will only relabel the partner
  trajectories $\gamma'$.  Note that direction-reversal is allowed.

  The relabelling permutation we will denote by $\rho$.  It acts on the
  ends of the trajectories, namely if the original trajectory $\gamma_j'$ run
  from $b_{j}$ to $b_{\wb{j}}$, the new one $\wt{\gamma}_j'$ runs from
  $\wt{b}_j = b_{\rho(j)}$ to $\wt{b}_{\wb{j}} = b_{\rho(\wb{j})}$.

  We want to have $a_z = \wt{b}_{\varpi_2(z)}$.  On one hand we have
  $a_z = b_{\varpi_1(z)}$ (since the original trajectories agreed
  with $\varpi_1$).  On the other, we have $\wt{b}_{\varpi_2(z)} =
  b_{\rho(\varpi_2(z))}$.  Therefore, $\rho$ needs to satisfy
  $\rho\varpi_2 = \varpi_1$, or
  \begin{equation*}
    \rho = \varpi_1 \varpi_2^{-1}.
  \end{equation*}

  We now verify that $\rho$ is a valid relabelling.  Namely, if it
  maps the trajectory end $j$ to the trajectory end $k$ (or $\wb{k}$),
  then it must map the other end $\wb{j}$ to $\wb{k}$
  (correspondingly, $k$).
  Putting it formally, $\rho$ must satisfy $\rho \Tt = \Tt \rho$.  By the
  definition of the target permutation (and since $\Tt$ is an
  involution), we have $\varpi_2^{-1}\Tt = \tau \Tt \varpi_2^{-1}$.
  Therefore,
  \begin{equation*}
    \rho \Tt = \varpi_1 \varpi_2^{-1} \Tt 
    = \varpi_1 \tau \Tt \varpi_2^{-1}
    = \varpi_1 \left(\varpi_1^{-1} \Tt \varpi_1 \right) \varpi_2^{-1}
    = \Tt \varpi_1 \varpi_2^{-1} = \Tt \rho,
  \end{equation*}
  as desired.
\end{proof}


\section{Summation over the unitary diagrams}
\label{sec:unitary}

Diagrams are schematic depictions of sets of trajectories, describing
which parts of trajectories $\{\gamma_j\}$ follow which parts of
trajectories $\{\gamma_j'\}$ by using a graph whose edges correspond
to stretches of $\gamma$ and vertices correspond to encounters (see
section~\ref{sec:semiclassics}).

In this section we describe how the diagrams can be encoded
mathematically, operations acting within the set of diagrams and the
cancellations resulting when we evaluate the contributions $\Delta^U$
as a sum over all possible diagrams.  We will consider here the case
of broken TRS (``unitary'' diagrams) before treating TRS in
section~\ref{sec:orthogonal}.

\subsection{Diagrams as ribbon graphs}

The drawings in Fig.~\ref{fig:NTRtraj_adv} hint that the natural
mathematical description of the topology of a set of contributing
trajectories is a \emph{ribbon graph}, also known as a \emph{fat
  graph} or a \emph{map}.  In combinatorics, a \emph{map} is a graph
which is drawn on a surface in such a way that (a) the edges do not
intersect and (b) cutting the surface along the edges one obtains
disjoint connected sets (``countries''), each homeomorphic to an open
disk.  The act of drawing a graph on surface fixes a cyclic ordering
of the edges incident to each vertex.  For an accessible introduction
to combinatorial maps and their application to RMT
(which is only one of their numerous applications) we refer the reader
to \cite{Zvo_mcm97}.  Further information can be found in
\cite{JacVis_Atlas,Tutte_GraphTheory}, among other sources.  Note that
in the present article we only use the term ``map'' in the
combinatorial sense.

Lacking a surface of sufficiently high genus, maps can be drawn on a
sheet of paper (though the edges might need to intersect).  It is convenient to
fatten the edges of such graphs to make them \emph{ribbons}, with two sides,
and represent the maps as \emph{ribbon graphs}.
At vertices, which are also fattened, the side of one edge is
connected to a side of another, which corresponds to the cyclic
ordering mentioned above.  Following the sides until we return to the
starting spot we trace out \emph{boundary walks} of the map.

In semiclassical diagrams, each trajectory stretch corresponds to a
fattened edge of the graph.  The trajectory $\gamma$ runs along one
side of the edge and $\gamma'$ runs (in the same direction) along the
other side.  The vertices of degree 1, henceforth called \emph{leaves},
correspond to trajectories either starting from or exiting into a
channel.  They are labelled by the symbols $1,\ldots,t$ and
$\wb{1},\ldots,\wb{t}$, which are understood to correspond to the
indices of the variables $\boldsymbol{a}$ and $\boldsymbol{b}$ in the
correlator $C_E(\boldsymbol{a},\boldsymbol{b})$,
equation~\eqref{eq:correlator}.  All other vertices correspond to
encounters which we will refer to as \emph{internal} vertices.  At a
vertex the trajectory can be followed by continuing along the side of
the edge onto the side of the vertex and then on to the corresponding
side of the next edge.

\begin{figure}[t]
  \includegraphics[scale=0.7]{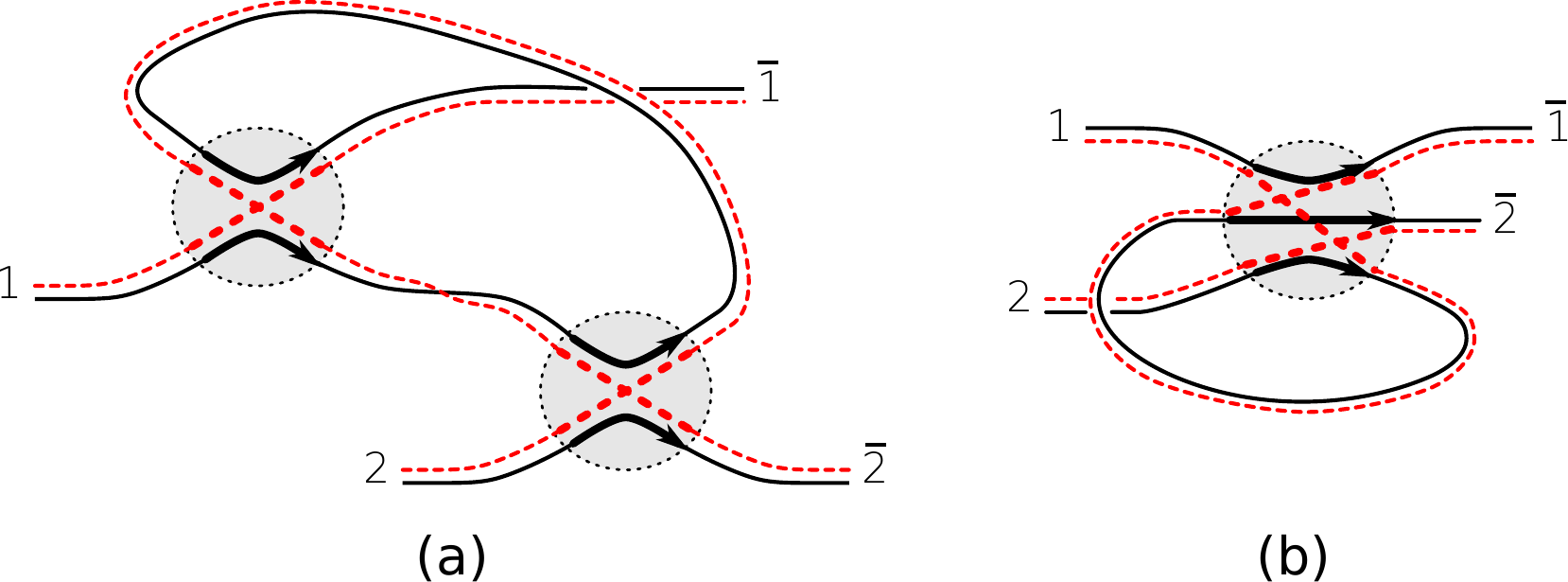}
  \caption{Two examples of semiclassical diagrams as drawn in the
    physics literature.  These examples correspond to diagrams (d) and
    (e) of Fig.~4 in \cite{mulleretal07}}
  \label{fig:encounter_physics}
\end{figure}

The ribbon graphs used in the present article are not significantly
different from the more traditional way the semiclassical diagrams are
depicted in the physics literature (see,
e.g.\ \cite{BerSchWhi_jpa03,mulleretal07,mulleretal05,SmiLerAlt_prb98}).
In the ``physics'' diagrams, the trajectories are shown as smooth with
the reconnections (which distinguish trajectories $\gamma$ from trajectories
$\gamma'$) happening in the small encounter regions, see
Fig.~\ref{fig:encounter_physics}.  Within
the encounter region trajectories are drawn as intersecting.  This has
the unfortunate effect of confining a lot of information to a small
part of the drawing.

\begin{figure}[t]
  \includegraphics[scale=0.7]{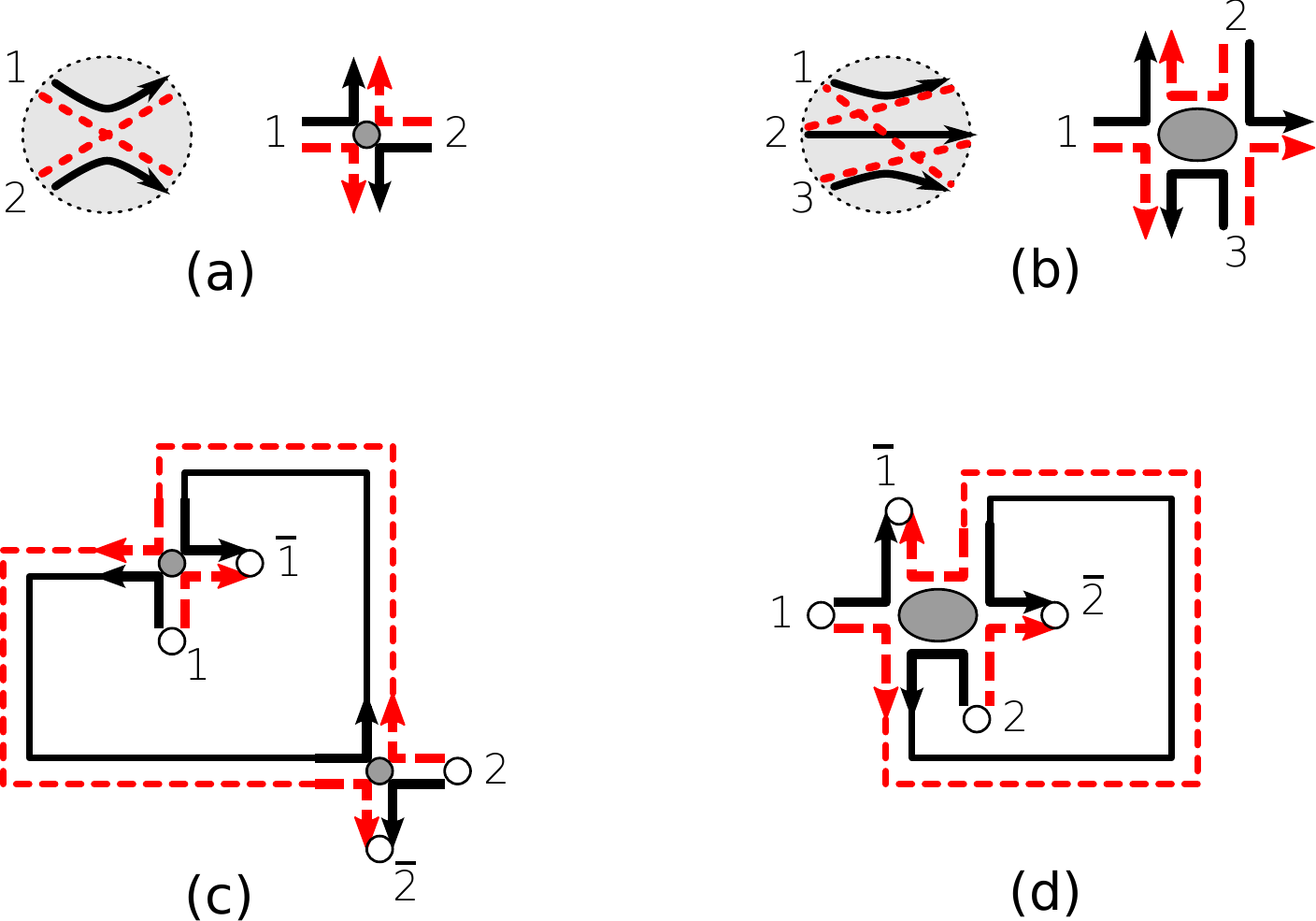}
  \caption{Untwisting the encounters into the vertices of the ribbon
    graph, (a) and (b).  The ribbon graphs (c) and (d) correspond to the
    diagrams of Fig.~\ref{fig:encounter_physics}.  To
    read off the trajectories, we start at the open end labelled $1$
    and follow the left side for $\gamma$ or the right side for
    $\gamma'$.  The leaves (vertices of degree 1) of the graph are
    shown as empty circles; the internal vertices are represented by
    the filled ellipses.  Edges going to leaves are normally drawn
    short to save space.  Other edges often have rectangular
    corners; the corners carry no particular meaning and were employed only
    due to the lack of artistic skill.}
  \label{fig:encounter_untwist}
\end{figure}

We ``untwist'' the encounter regions as shown in
Figs.~\ref{fig:encounter_untwist}(a) and (b), making sure the trajectories do
not intersect at encounters (vertices).  Some intersections or twists
might still be necessary, due to high genus or non-orientability of
the graph, but these are shifted away from vertices to avoid
overloading them with detail.  This way of drawing allows the
interpretation of the trajectories as parts of the boundary of the
resulting ribbon graphs and opens the area up to a variety of existing
combinatorial methods.

We formalize the above paragraphs as a definition.

\begin{definition}
  \label{def:unit_diagram}
  The \emph{unitary diagram} with the target permutation $\tau$ is a
  map satisfying the following:
  \begin{enumerate}
  \item There are $t$ vertices of degree 1 (henceforth \emph{leaves}) labelled
    with symbols $1,\ldots, t$ and $t$ leaves labelled
    with symbols $\wb{1},\ldots, \wb{t}$.
  \item All other vertices have even degree greater than $2$.
  \item \label{enum:boundary} A portion of the boundary running from
    one leaf to the next is called a \emph{boundary segment}.  Each
    leaf $z$ is incident to two boundary segments, one of which is a
    segment running between leaves $j$ and $\wb{j}$ and the other
    between leaves $\tau(k)$ to $\wb{k}$ (where either $z=j=\tau(k)$
    or $z=\wb{j}=\wb{k}$).  The segments are given direction $j \to
    \wb{j}$ and $\tau(k)\to\wb{k}$ and marked by solid and dashed
    lines correspondingly.  The following conditions are satisfied:
    \begin{enumerate}
    \item each part of the boundary is marked exactly once,
    \item \label{enum:two_sides} each edge is marked solid on one side
      and dashed on the other, both running in the same direction.
    \end{enumerate}
  \end{enumerate}
\end{definition}

We will now discuss some properties of the diagrams that follow from
the basic definition above.

The boundary segments marked solid correspond to the
$\gamma$-trajectories and dashed correspond to $\gamma'$-trajectories.
The boundary walks of the map can be read by alternatingly following
$\gamma$ and $\gamma'$ (in reverse) trajectories.  Then each edge is
traversed in opposite directions on the two sides, which means the
graph is orientable \cite[Chap. X]{Tutte_GraphTheory}. The labels of the
leaves are arranged in a special way: a trajectory $\gamma$ starts at
a vertex $j$ and ends at $\wb{j}$.  The trajectory $\gamma'$ that
immediately follows it, starts (when read in reverse) at $\wb{j}$ and
ends at $\tau(j)$, where $\tau$ is the target permutation of the
diagram in question, see the discussion of
Section~\ref{sec:target_unitary}.  To summarize,

\begin{lemma}
  \label{lem:unit_prop} 
  A unitary diagram satisfies the following properties:
  \begin{enumerate}
  \item \label{enum:orientable} The map is orientable.
  \item \label{enum:labelling} Each boundary walk passes through a
    non-zero even number of leaves.  Their labels form the sequence of
    the form 
    \begin{equation*}
      j,\ \wb{j},\ \tau(j),\ \wb{\tau(j)},\ \tau^2(j),\ \ldots,\
      \wb{\tau^f(j)},
    \end{equation*}
    where $\tau$ is the target permutation of the diagram and
    $\tau^{f+1}(j) = j$.  This establishes a one-to-one correspondence
    between the boundary walks and the cycles of $\tau$.
 \end{enumerate}
\end{lemma}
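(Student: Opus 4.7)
The plan is to prove both properties by direct inspection of the boundary walks of the ribbon graph, using the marking conventions of Definition~\ref{def:unit_diagram} as the only tools.

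For orientability, I would exploit condition~\ref{enum:two_sides}: each edge receives a preferred direction that is shared by its solid and dashed sides. Define a traversal rule for boundary walks by declaring that the solid side of every edge is traversed in the marked direction and the dashed side is traversed against the marked direction. Because each edge has exactly one solid side and one dashed side, this rule assigns a direction to each of the two ribbon-sides of every edge, and the two assigned directions are opposite. That is precisely the combinatorial criterion for orientability (the two sides of every edge are traversed in opposite senses by the boundary walks; see, e.g., \cite{Tutte_GraphTheory}). The transition rule at each vertex is the one already encoded in the cyclic edge-ordering of the map, so the construction is globally consistent and proves part~\ref{enum:orientable}.

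For part~\ref{enum:labelling}, I would pick any leaf labelled $j$ and trace the boundary walk through it with the convention above. The unique boundary segment through leaf $j$ on the solid side is, by condition~(3) of Definition~\ref{def:unit_diagram}, a segment going from leaf $j$ to leaf $\bar j$; following it forward lands us at $\bar j$. At $\bar j$, the other incident segment is dashed; by condition~(3) applied with $z=\bar j = \bar k$, the unique such segment has $k=j$ and runs from leaf $\tau(j)$ to $\bar j$. Traversing it backward according to our convention delivers us to leaf $\tau(j)$. Iterating, the boundary walk visits leaves in the order
\begin{equation*}
j,\ \bar j,\ \tau(j),\ \overline{\tau(j)},\ \tau^2(j),\ \ldots,
\end{equation*}
and closes up exactly when we first return to $j$, i.e.\ when $\tau^{f+1}(j)=j$. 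This directly exhibits the stated sequence and shows that each boundary walk corresponds to one cycle of $\tau$.

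Finally, to get the one-to-one correspondence, I would note that every leaf is incident to exactly two boundary segments, which together constitute a single ``turn'' of the boundary walk at that leaf; hence each leaf lies on precisely one boundary walk, and the map walk-to-cycle defined above is bijective. The main obstacle, I expect, is purely expository: stating the orientability argument rigorously requires being careful that the ``forward-on-solid / backward-on-dashed'' prescription is a genuine section of the boundary-face double cover, not a mere visual convention. Once condition~\ref{enum:two_sides} is read correctly as furnishing a common direction on both sides of each edge, this is immediate, and the rest of the proof is essentially a bookkeeping exercise built on the definition of $\tau$ from Section~\ref{sec:target_unitary}.
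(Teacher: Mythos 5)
Your proposal is correct and follows essentially the same route as the paper: the paper also reads the boundary walks by traversing solid ($\gamma$) sides in the marked direction and dashed ($\gamma'$) sides in reverse, deduces orientability from the two sides of every edge being traversed in opposite directions (condition~\ref{enum:two_sides}), and obtains the leaf sequence $j,\ \wb{j},\ \tau(j),\ \wb{\tau(j)},\ldots$ directly from the definition of the target permutation. Your bookkeeping at the leaves and the resulting walk-to-cycle bijection match the paper's argument, so no further comment is needed.
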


\subsection{Operations on diagrams}
\label{sec:operations_u}

We now define some operation on diagrams which will later allow us to find
cancellations of their contributions evaluated according to Definition~\ref{def:Essen_ansatz}.

\begin{figure}[t]
  \centering
  \includegraphics[scale=0.9]{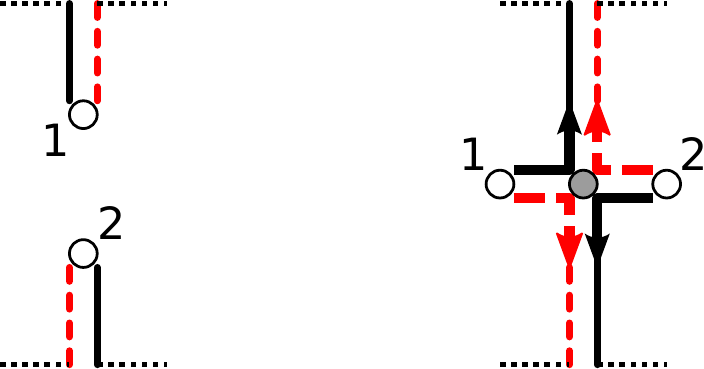}
  \caption{The operation of tying two leaves together.  By dotted
    lines we represent the rest of the graph.}
  \label{fig:tying}
\end{figure}

The first operation is \emph{tying} together two leaves.  Given two
leafs $j$ and $k$ (with no bars) we form a new vertex of degree 4 as shown in
Fig.~\ref{fig:tying}.  Note that the trajectory $\gamma$ connected to
the vertex $j$ is still connected to $j$ after tying, whereas the
$\gamma'$ trajectory becomes connected to $k$.  Tying two leaves
together creates an internal vertex and two edges.

The reversal of this operation is \emph{untying}.  It can only be
performed if there are two leaves (without bars) attached on the
opposite sides of the vertex of degree 4 that is to be untied.

\begin{lemma}
  \label{lem:tying_target}
  Consider a diagram with the target permutation $\tau$.  If we tie the
  leaves $j$ and $k$ together, the target permutation of the modified
  diagram is $(j\,k)\tau$.  Untying a 4-vertex with leaves $j$ and
  $k$ directly attached to its opposite sides also results in the
  target permutation $(j\,k)\tau$.
\end{lemma}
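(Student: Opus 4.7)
The plan is to exploit the boundary-walk characterization of the target permutation given in Lemma~\ref{lem:unit_prop}(\ref{enum:labelling}): the value $\tau(m)$ is determined by starting at the unbarred leaf $m$, traversing the solid boundary along $\gamma_m$ to $\wb{m}$, and then following the dashed boundary in reverse until the next unbarred leaf is encountered. Consequently, to identify the new target permutation $\tau'$ of the tied diagram it suffices to track how these boundary-walk paths change.

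First, I would isolate the local effect of tying near the new $4$-vertex shown in Fig.~\ref{fig:tying}. By construction, the solid side of the edge incident to leaf $j$ still terminates at $j$, whereas the dashed side is rerouted through the new vertex to leaf $k$, and symmetrically for $k$. In boundary-walk terms, this says exactly that the dashed segment which, in the original diagram, ended at leaf $j$ now ends at leaf $k$, and vice versa; no other dashed segment is affected, and no solid segment is affected at all. One should also briefly verify that the result is still a valid unitary diagram in the sense of Definition~\ref{def:unit_diagram}: the new internal vertex has even degree $4 > 2$, and the ``solid on one side, dashed on the other, both same direction'' convention of item~\ref{enum:two_sides} is preserved by the local picture in Fig.~\ref{fig:tying}.

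Second, I would read off the consequence for the target permutation. For any unbarred leaf $m$, the solid traversal from $m$ to $\wb{m}$ along $\gamma_m$ is unchanged, and the dashed segment whose reverse traversal begins at $\wb{m}$ is also the same segment as before; what may have changed is the unbarred leaf at which this segment now terminates. The only way this terminal leaf can differ from the original $\tau(m)$ is if it was one of the two swapped leaves $j$ or $k$. A three-case analysis ($\tau(m)=j$, $\tau(m)=k$, or $\tau(m)\notin\{j,k\}$) then shows $\tau'(m)=(j\,k)(\tau(m))$ in every instance, hence $\tau'=(j\,k)\tau$ as permutations.

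Finally, the untying statement follows immediately: untying is by definition the inverse of tying a $4$-vertex whose two opposite external legs are unbarred leaves, and $(j\,k)$ is an involution, so if tying converts any target $\tau'$ into $(j\,k)\tau'$, then untying must convert any target $\tau$ into $(j\,k)\tau$. I do not expect a genuine obstacle here; the only real care lies in the bookkeeping of the case analysis and in confirming that the local rewiring of Fig.~\ref{fig:tying} truly amounts to swapping precisely the two dashed endpoints at $j$ and $k$.
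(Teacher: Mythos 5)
Your proposal is correct and follows essentially the same route as the paper's proof: both read off the target permutation by following a solid segment to $\wb{m}$ and then the dashed segment in reverse, observe that tying swaps only the dashed endpoints at $j$ and $k$ while leaving solid segments untouched (hence left multiplication by $(j\,k)$), and deduce the untying case from $(j\,k)$ being its own inverse. Your additional case analysis and the check that the tied object remains a valid diagram are just more explicit bookkeeping of the same argument.
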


\begin{proof}
  The target permutation is read off a diagram by following a
  $\gamma$-trajectory (solid, black lines in the figures) from a leaf
  $i$ to the next leaf (which must be $\wb{i}$) and then following the
  $\gamma'$-trajectory (dashed, red lines) in reverse until the next
  leaf which is the leaf $\tau(i)$.  When the leaves $j$ and $k$ are
  tied together, the $\gamma$-trajectories are not changed, but the
  $\gamma'$-trajectories that originated from $j$ and $k$ are switched
  around.  Thus the target permutation of the modified diagram acts as
  the permutation $\tau$ followed by the interchange of $j$ and $k$.
  Hence the multiplication by $(j\,k)$ \emph{on the left}.

  Untying the vertex with leaves $j$ and $k$ is thus equivalent to
  multiplying by the inverse of $(j\,k)$, which is the same as
  multiplying by $(j\,k)$ itself.
\end{proof}

This operation of tying can be generalized to the case of two leaves
with bars and to the case of more than two leaves (of the same kind -- 
either all with or all without bars).  
The reversal, untying, will only work on a vertex of degree
$2m$ if there are $m$ leaves of the same kind attached to it.
In terms of the target permutations, the tying of several leaves
without bars is equivalent to multiplying by the cycle of length $m$.
The tying of leaves with bars is multiplication by the cycle \emph{on 
the right} and untying is multiplication by the inverse cycle on the
appropriate side.  The precise order of the elements in the cycle has
to agree with the spatial arrangement of the leaves around the cycle.
However, we will not need these operations in our proofs and thus 
omit the precise details here.  We will need these generalisations however 
to obtain moment generating functions as in the accompanying paper \cite{bk13b}, 
so we provide full details there.

The second operation we will need is \emph{contracting an edge}.  While any
edge connecting two (internal) vertices in a diagram can be contracted to
produce another diagram, we will only apply this operation in the
precise circumstances described below.

\begin{figure}[t]
  \centering
  \includegraphics[scale=0.9]{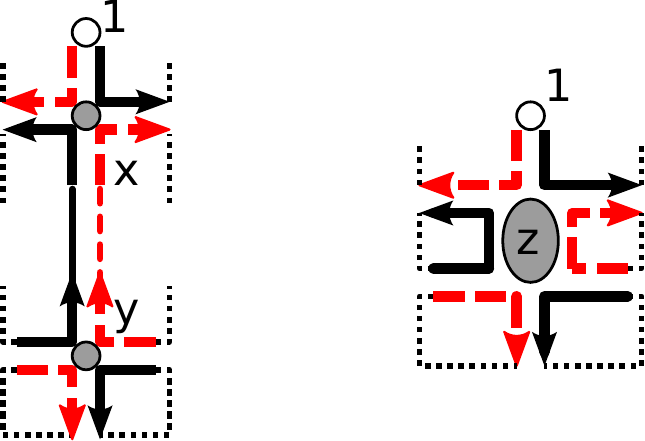}
  \caption{The operation of contracting an edge.  By dotted
    lines we represent the rest of the graph.}
  \label{fig:shrinking}
\end{figure}

In order to contract an edge going between vertices $x$ and $y$, see
Fig.~\ref{fig:shrinking}, we require that the vertex $x$ has degree 4
and that the edge attached to $x$ \emph{opposite} the edge $(x,y)$ is
coming directly from the leaf number $1$.  The vertex $y$ has no
restrictions, in particular it can be of any even degree.

The operation of contracting an edge leads to the diagram with a larger
vertex, shown on the right of Fig.~\ref{fig:shrinking}.  We can
reverse the operation if we have a vertex of an even degree larger
than 4 and an edge connecting the vertex to leaf $1$.  As the figure
suggests, the new vertex inherits the edge to $1$ and the two neighbouring
edges.  This operation is called \emph{splitting the vertex}.  The
following lemma is obvious.

\begin{lemma}
  \label{lem:shrinking_target}
  The operations of contracting an edge or splitting the vertex do not
  change the target permutation.
\end{lemma}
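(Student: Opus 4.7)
My plan is to reduce both claims to the assertion that each operation preserves the boundary walks of the ribbon graph as cyclic sequences of leaves, and then invoke Lemma~\ref{lem:unit_prop}\ref{enum:labelling}, which already identifies the cycles of $\tau$ with the boundary walks labelled by their leaves. Since splitting is the exact inverse of contraction, it suffices to treat contraction. No leaf is added, removed, or relabelled, and the solid/dashed marking on every surviving edge is untouched, so condition~\ref{enum:two_sides} of Definition~\ref{def:unit_diagram} persists; what remains is a purely local combinatorial check at the merged vertex.

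Concretely, I would describe contraction of $(x,y)$ as deleting the degree-$4$ vertex $x$ and splicing its three remaining edge-ends (the leaf-$1$ edge and the two side edges) into the cyclic order at $y$, in the slot previously occupied by the end of $(x,y)$. Of the four corners at $x$, the two incident to $(x,y)$ are absorbed into the corresponding corners at $y$ on either side of $(x,y)$, while the two corners flanking the leaf-$1$ edge survive verbatim as corners of the merged vertex. The key observation is that a corner $c_x$ at $x$ adjacent to $(x,y)$ and the corner $c_y$ at $y$ across $(x,y)$ from it lie on the same boundary walk: by the definition of ribbon-graph boundary walks, the walk crosses directly between them along one side of $(x,y)$. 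Hence when $c_x$ and $c_y$ merge into a single new corner, no walk is cut in two and no two walks are glued into one; each boundary walk visits exactly the same ordered sequence of leaves as before.

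By Lemma~\ref{lem:unit_prop}\ref{enum:labelling}, the cycles of $\tau$ (both their structure and the identities of the leaves they contain) are therefore unchanged. I expect the only subtle point to be the bookkeeping verifying that matching corners at $x$ and $y$ really do lie on a common walk; this is immediate from how boundary walks traverse an edge but is worth spelling out with reference to the picture in Fig.~\ref{fig:shrinking}. No global argument about genus, orientability, or the number of connected components should be necessary, since the contraction of a non-loop edge is a standard local move on ribbon graphs that preserves the number of boundary walks.
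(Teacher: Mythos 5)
Your argument is correct. The paper offers no proof at all here --- it simply declares the lemma obvious --- and your reasoning (contraction of the non-loop edge $(x,y)$ merges corners lying on a common boundary walk, so each boundary walk visits the same cyclic sequence of leaves, and by Lemma~\ref{lem:unit_prop}\ref{enum:labelling} the cycles of $\tau$ are read off exactly from those sequences; splitting is handled as the inverse move) is precisely the verification the authors leave implicit, so your write-up is a sound and compatible filling-in of that gap.
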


In some cases considered later the diagram will have no leaf with
number $1$.  In such a case the leaf with the smallest number will
play the role of leaf $1$.

\subsection{Cancellations among unitary diagrams}
\label{sec:cancel_unit}

The result of Lemma~\ref{lem:structure_of_correlator} states that the
semiclassical correlator $C_E^{\U}(\boldsymbol{a},\boldsymbol{b})$ can be expressed as
\begin{equation*}
  C_E^{\U}(\boldsymbol{a},\boldsymbol{b}) = \delta_{t,s} \sum_{\sigma,\pi \in
    S_t} \Delta^{\U}(\tau) \prod_{j=1}^{t} \delta\!\left({a_j}-{b_{\sigma(j)}}\right) 
  \, \delta\!\left({a_\wb{j}}-{b_\wb{\pi(j)}}\right),
\end{equation*}
where $\Delta^{\U}(\tau)$ is the sum of contributions of diagrams with
the target permutation $\tau = \sigma^{-1}\pi$.  The contribution of
every diagram is evaluated according to the rules in
Definition~\ref{def:Essen_ansatz}.  Namely, each internal vertex of a
diagram gives a factor of $(-N)$ and each edge (including the ones
leading to the leaves) a factor of $1/N$.  Denoting by
$D_{\V,\E}^{\U}(\tau)$ the number of diagrams with the target
permutation $\tau$, $\V$ internal vertices and $\E$ edges, we have the
following statement, which finished the proof of
Theorem~\ref{thm:main_equivalence} in the case of broken time-reversal
symmetry.

\begin{theorem}
  \label{thm:sum_diag_u}
  The total contribution of the unitary diagrams with target permutation
  $\tau$ is
  \begin{equation}
    \label{eq:sum_diag_u}
    \Delta^{\U}(\tau) := \sum_{\V,\E} D_{\V,\E}^{\U}(\tau)
    \frac{(-1)^\V}{N^{\E-\V}} = V_N^{\U}(\tau). 
  \end{equation}
\end{theorem}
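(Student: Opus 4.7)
The natural strategy is to show that $\Delta^U(\tau)$ satisfies the same recursion~\eqref{eq:recur_CUE} as the class coefficient $V_N^U(\tau)$, together with the matching base case $\Delta^U(\emptyset)=1$, which comes from the empty diagram with $V=E=0$. Induction on the total number of leaves then gives the result.

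For each diagram $D$ contributing to $\Delta^U(\tau)$, I would classify $D$ by what leaf $1$ is attached to: (A)~another leaf directly; (B)~an internal vertex of degree greater than $4$; (C)~a degree-$4$ vertex whose edge opposite leaf $1$ leads to an internal vertex; or (D)~a degree-$4$ vertex whose opposite edge leads to another leaf. The contract/split operations of Section~\ref{sec:operations_u} biject (B) with (C): splitting a vertex of degree greater than $4$ produces a type-(C) diagram, and contracting the opposite edge in a type-(C) diagram produces a type-(B) one. Because this operation preserves the target permutation (Lemma~\ref{lem:shrinking_target}) and changes both $V$ and $E$ by $\pm 1$, it fixes $N^{E-V}$ but flips $(-1)^V$, so the (B) and (C) contributions cancel in pairs.

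Type (A) can only occur when $\tau(1)=1$: Lemma~\ref{lem:unit_prop} forces the adjacent leaf to be $\bar 1$. Pruning that edge and its two leaves is a bijection onto diagrams with target $\tau|_{\{2,\ldots,t\}}$, and the pruned edge contributes a factor $1/N$, giving total $\tfrac{\delta_{c_1,1}}{N}\,\Delta^U(\tau|_{\{2,\ldots,t\}})$. For type (D), a careful analysis of the two-sides orientation axiom in Definition~\ref{def:unit_diagram} shows that the opposite leaf $\ell$ at the $4$-vertex must have the same (unbarred) type as leaf $1$, so $\ell\in\{2,\ldots,t\}$. The untying operation inverse to Lemma~\ref{lem:tying_target} removes the $4$-vertex and produces a diagram with target $(1,\ell)\tau$; since it removes one vertex and two edges, a type-(D) diagram contributes $(-N)\cdot(1/N)^2=-1/N$ times the contribution of its untied counterpart.

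Using that $\Delta^U$ depends only on the cycle type of its argument (by invariance under relabelling of leaves), summing the type-(D) contribution over $\ell$ splits naturally: the $c_1-1$ choices of $\ell$ in the cycle of $1$ realise each ordered split $p+q=c_1$ exactly once, and the $c_j$ choices of $\ell$ in a different cycle of length $c_j$ each yield a diagram in which the two cycles are merged into one of length $c_1+c_j$. Assembling the contributions from (A) and (D) and multiplying by $N$ gives exactly~\eqref{eq:recur_CUE} with $\Delta^U$ in place of $V_N^U$, closing the induction. \emph{The main obstacle} is the orientation/sidedness analysis at the $4$-vertex in type (D)---verifying that the trajectory-direction axiom forces the opposite leaf to be unbarred, and that the untying map is a clean sign- and multiplicity-preserving bijection onto diagrams of the modified target.
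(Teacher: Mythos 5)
Your proposal is correct in substance, but it takes a genuinely different route from the paper. The paper defines a global involution: at the minimal leaf it repeatedly \emph{unties} 4-vertices, then performs a single contract-or-split, then re-ties; the fixed points are the completely untieable diagrams, which are identified with primitive (monotone) factorizations of $\tau$, and it is the \emph{counts of these factorizations} that are then shown to obey the recursion \eqref{eq:recur_CUE}. You instead bypass factorizations entirely and derive the recursion for $\Delta^{\U}$ itself by a one-step classification at leaf $1$: the (B)/(C) cancellation via Lemma~\ref{lem:shrinking_target}, the pruning term $\delta_{c_1,1}N^{-1}\Delta^{\U}(c_2,\ldots,c_k)$, and the untying term $-N^{-1}\sum_{\ell}\Delta^{\U}\bigl((1\,\ell)\tau\bigr)$ via Lemma~\ref{lem:tying_target}, whose split over $\ell$ in or out of the cycle of $1$ reproduces the split/merge terms of \eqref{eq:recur_CUE}. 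This is shorter and arguably cleaner for the bare equivalence; what it gives up is the explicit combinatorial byproduct of the paper's proof (the identification of surviving diagrams with primitive factorizations, i.e.\ the Matsumoto--Novak expansion), which the paper reuses almost verbatim for the orthogonal case and in the companion work. Two small points to tighten: first, your closing induction should not be on the number of leaves, since the terms $\Delta^{\U}(p,q,c_2,\ldots,c_k)$ and $\Delta^{\U}(c_1+c_j,\ldots,\hat{c_j},\ldots)$ live at the same $t$; rather, argue that both sides are formal power series in $1/N$ and that \eqref{eq:recur_CUE} with the initial condition determines the coefficients order by order in $1/N$ (the paper is equally terse on this). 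Second, the orientation analysis you flag as the main obstacle does go through: at an internal vertex the edge-ends alternate incoming/outgoing because each edge carries its solid and dashed sides in the same direction, so at a 4-vertex carrying leaf $1$ the opposite edge is directed into the vertex and, if it ends at a leaf, that leaf is unbarred; moreover the opposite edge cannot be a loop, since an adjacent-ended loop would create a boundary face containing no leaf (contradicting condition \ref{enum:boundary} of Definition~\ref{def:unit_diagram}) and an opposite-ended loop violates the common-direction requirement. These are exactly the facts the paper's untie/contract steps use implicitly, so once recorded your argument is complete.
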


\begin{proof}
  The proof of the theorem has two parts.  In the first part we
  exhibit cancellations among the diagrams.  In the second part, the
  diagrams that survive cancellations will be shown to satisfy the
  same recursion as $V_N^{\U}(\tau)$ with the same initial conditions.

  The cancellation is based on the operations described in
  Section~\ref{sec:operations_u}.  We define an involution $P$ on the
  set of diagrams which changes $\V$ by one but leaves $(\E-\V)$
  invariant.  This means that the two diagrams that are images of each
  other under $P$ produce contributions to the sum in
  \eqref{eq:sum_diag_u} that are equal in magnitude but opposite in
  sign.  The sum therefore reduces to the sum over the fixed points of
  $P$.  Since each diagram has many edges and vertices, the main
  difficulty in describing $P$ is to uniquely specify the edge to be
  contracted or the vertex to be split.

  \begin{figure}[t]
    \centering
    \setlength{\unitlength}{\textwidth}
    \begin{picture}(0.95,0.3)%
      \put(0.0,0.03){\includegraphics[scale=0.6]{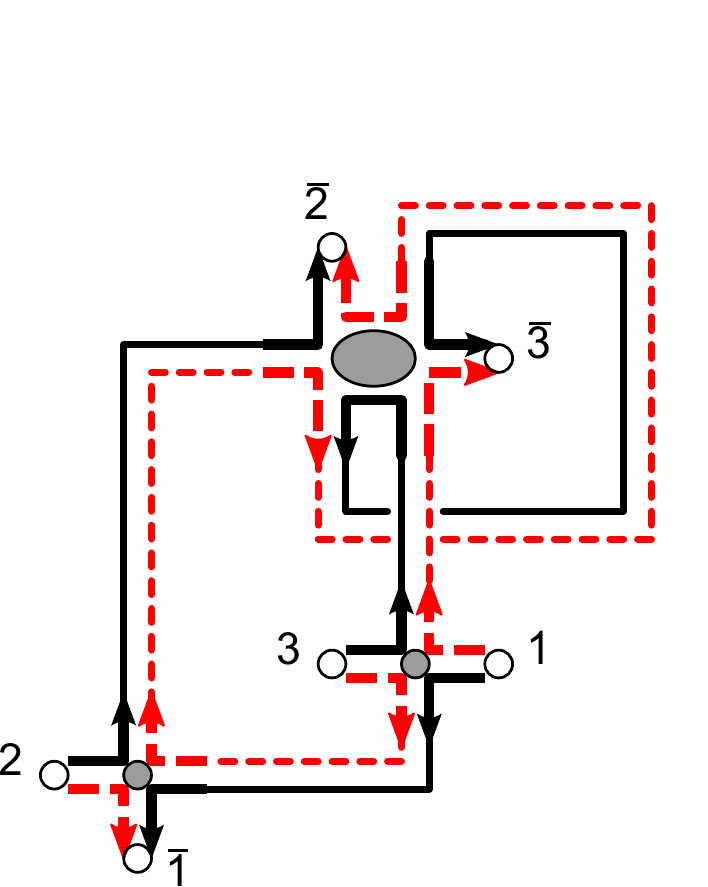}}
      \put(0.1,0){\makebox(0,0)[lb]{(a)}}%
      \put(0.25,0.03){\includegraphics[scale=0.6]{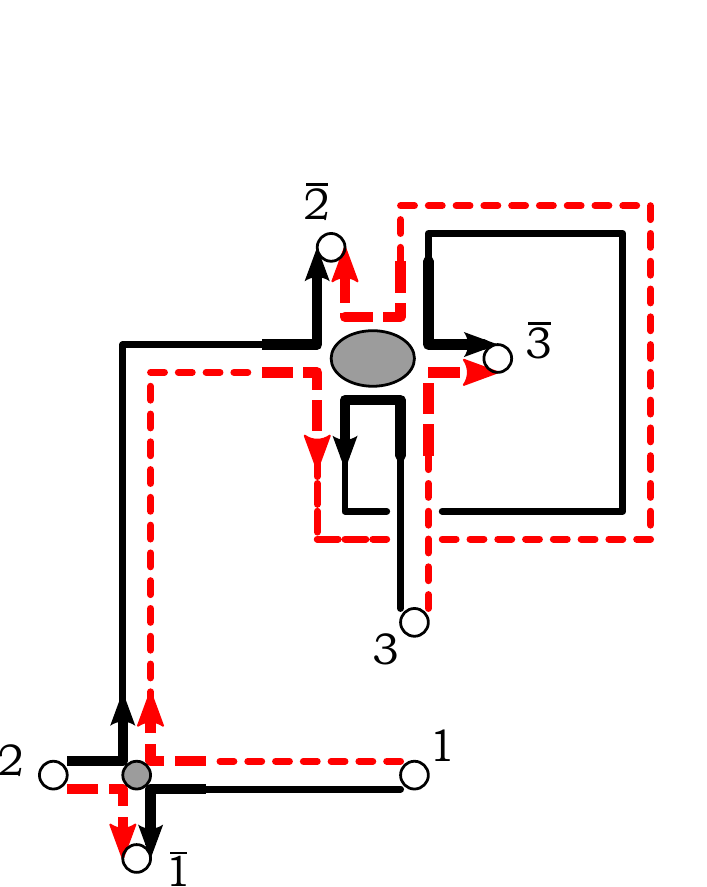}}
      \put(0.35,0){\makebox(0,0)[lb]{(b)}}%
      \put(0.52,0.055){\includegraphics[scale=0.6]{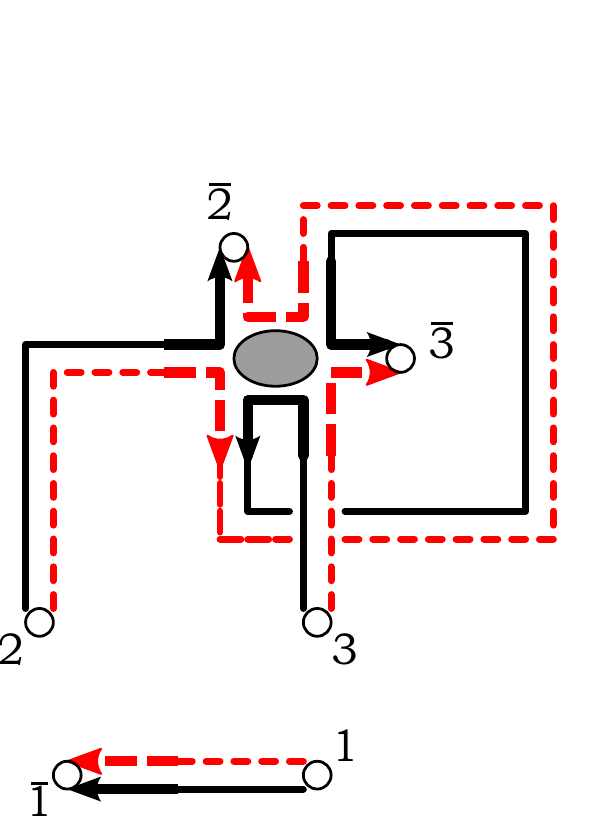}}
      \put(0.60,0){\makebox(0,0)[lb]{(c)}}%
      \put(0.75,0.11){\includegraphics[scale=0.6]{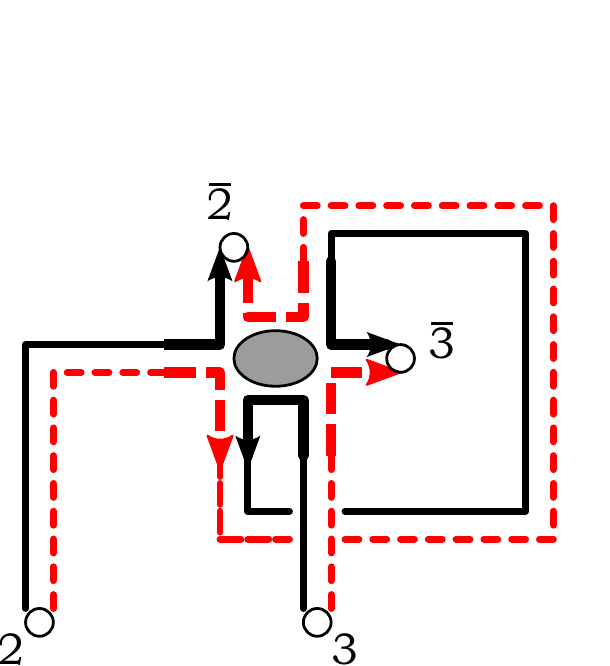}}
      \put(0.83,0){\makebox(0,0)[lb]{(d)}}%
    \end{picture}
    \caption{Repeatedly applying steps \ref{enum:minimal} and
      \ref{enum:untie} of the cancellation algorithm.  Part (a) is the
      original diagram.  The node $(1\,3)$ has been untied in (b); the
      node $(1\,2)$ has been untied in (c).  The edge connecting $1$ and
      $\wb{1}$ has been removed in (d) so that $2$ is the minimal leaf,
      but the diagram cannot be untied any further.}
    \label{fig:algorithmA}
  \end{figure}

  The involution $P$ is described by the following algorithm:
  \begin{enumerate}
  \item \label{enum:minimal} Find the leaf with minimal number $j$ (without bar).
  \item \label{enum:untie} If it is attached to a vertex of degree 4
    whose opposite edge ends in a leaf, \emph{untie} the node and 
    return to step \ref{enum:minimal}.  If the untying produces an edge 
    which directly connects two leaves ($k$ and $\wb{k}$), this edge is 
    removed from consideration.
  \item \label{enum:shrink} If the leaf $j$ is attached to a node of
    degree 4 whose opposite edge leads to another internal vertex,
    \emph{contract} the edge.
  \item \label{enum:separate} Otherwise, the leaf $j$ is attached to a
    vertex of degree 6 or higher.  We \emph{split} this vertex into
    two by inserting a new edge.
  \item \label{enum:reverse} Reverse all operations performed in step~\ref{enum:untie}.
  \end{enumerate}
  
  The algorithm is illustrated by an example in
  Figs.~\ref{fig:algorithmA} and \ref{fig:algorithmB}.  The original
  diagram is Fig.~\ref{fig:algorithmA}(a) and the result is in
  Fig.~\ref{fig:algorithmB}(g).  However, if we start with
  Fig.~\ref{fig:algorithmB}(g) and apply the algorithm, we would use
  step \ref{enum:shrink} instead of step \ref{enum:separate} and
  arrive at Fig.~\ref{fig:algorithmA}(a).

  \begin{figure}[t]
    \centering
    \setlength{\unitlength}{\textwidth}
    \begin{picture}(0.75,0.3)%
      \put(0.0,0.1){\includegraphics[scale=0.6]{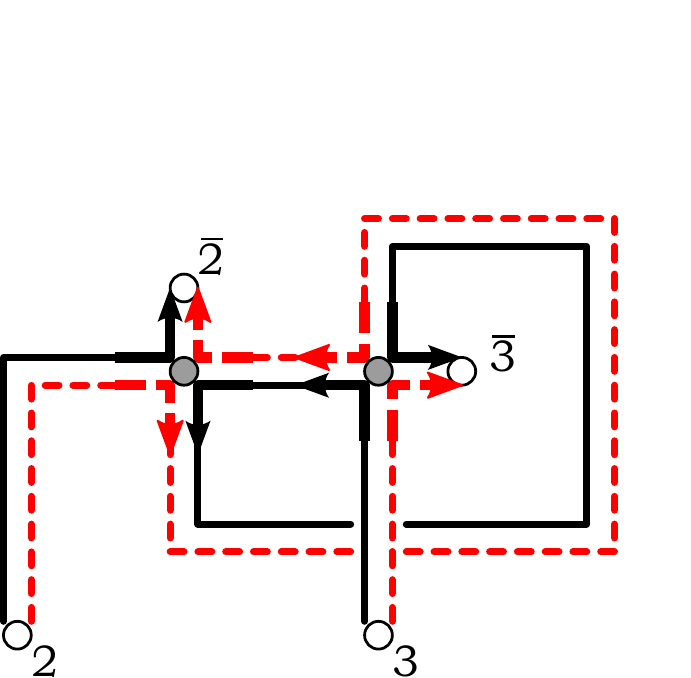}}
      \put(0.1,0){\makebox(0,0)[lb]{(e)}}%
      \put(0.25,0.03){\includegraphics[scale=0.6]{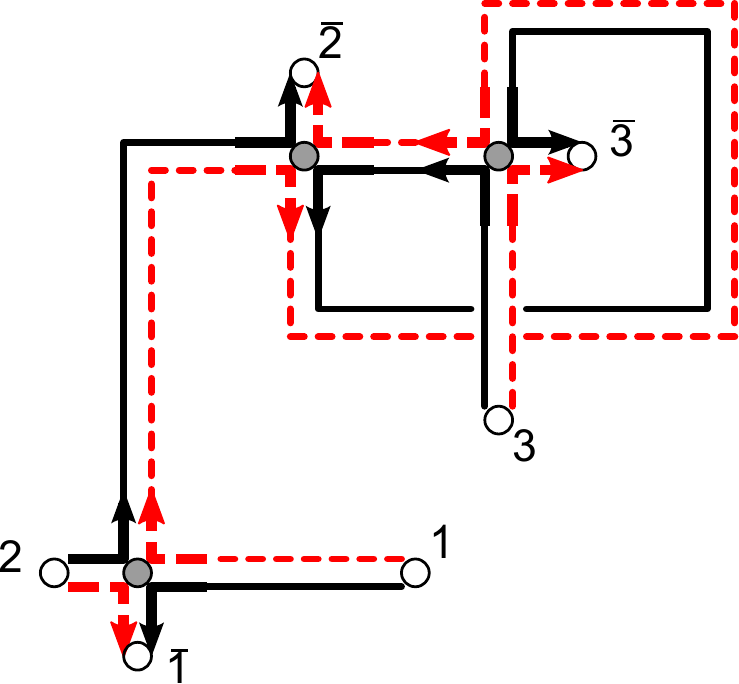}}
      \put(0.35,0){\makebox(0,0)[lb]{(f)}}%
      \put(0.52,0.03){\includegraphics[scale=0.6]{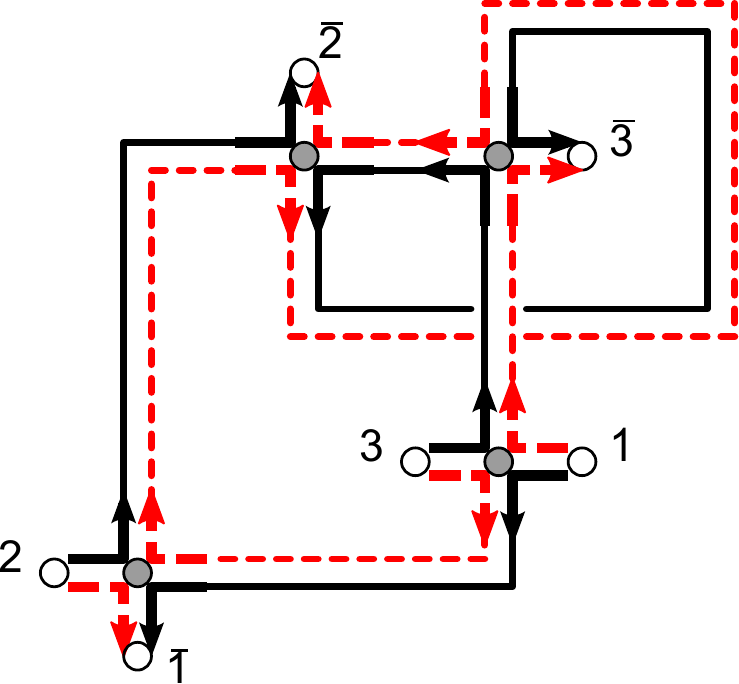}}
      \put(0.60,0){\makebox(0,0)[lb]{(g)}}%
   \end{picture}
    \caption{Applying steps \ref{enum:separate} and \ref{enum:reverse}
      of the cancellation algorithm.  Part (e) is the result of
      splitting the 6-vertex of Fig.~\ref{fig:algorithmA}(d) into
      two.  Parts (f) and (g) are tying of the leaves $(1\,2)$ and
      $(1\,3)$ correspondingly.  The diagram in part (g) is the
      counterpart of the original diagram in
      Fig.~\ref{fig:algorithmA}(a).}
    \label{fig:algorithmB}
  \end{figure}

  The algorithm preserves the target permutation (see
  Lemma~\ref{lem:shrinking_target}), therefore both diagrams
  contribute to the same sum.  Note that at most one of the operations
  in steps \ref{enum:shrink} and \ref{enum:separate} are performed.  
  These operations are the inverses of each other and provide
  the required difference in the number $\V$ while leaving $(\E-\V)$
  invariant.  If the steps \ref{enum:shrink} or \ref{enum:separate}
  are never reached, by virtue of the diagram untying
  completely, the corresponding contribution does not cancel and has
  to be counted.
  
  Consider a diagram that unties completely (for example, the diagram
  in Fig.~\ref{fig:NTRtraj_adv}(a)).  Recording the untyings,
  as suggested by Lemma~\ref{lem:tying_target}, we get 
  \begin{equation}
    \label{eq:untying}
    (s_\V\,r_\V) \cdots (s_2\,r_2)\,(s_1\,r_1) \, \tau = {id},
    \qquad s_j < r_j,
  \end{equation}
  where the identity permutation on the right corresponds to the
  resulting ``empty'' diagram.  The condition in
  step~\ref{enum:minimal} of the algorithm ensures that if $j < k$
  then $s_j \leq s_k$.  Given the sequence of untyings, we can
  reconstruct the diagram uniquely, therefore the diagrams that
  survive the cancellations are in one-to-one correspondence with the
  factorizations of $\tau$ into a product of transpositions satisfying
  \begin{equation}
    \label{eq:prim_fact}
    \tau = (s_1\,r_1) \, (s_2\,r_2) \cdots (s_\V\,r_\V),
    \qquad s_j < r_j, \quad s_j \leq s_{j+1}.
  \end{equation}
  These are known as \emph{primitive factorizations}\footnote{The
    traditional definitions are slightly different from ours in
    ordering the second elements of the transpositions} that are
  counted by \emph{monotone single Hurwitz numbers}
  \cite{GouGuaNov_prep11,MatNov_fpsac10}.  It is known
  \cite{MatNov_fpsac10} that the number of such factorizations
  provides the coefficients in the asymptotic expansion of the class
  coefficients $V_N^{\U}(\tau)$.  Here we provide a basic alternative
  proof of this fact that will be easy to generalize to the orthogonal
  case.
  
  Denote by $p_{\V}^{\U}(\tau)$ the number of primitive factorizations of the
  target permutation $\tau$ into $\V$ transpositions.  We have shown
  that
  \begin{equation}
    \label{eq:sum_diag_u_prim}
    \Delta^{\U}(\tau) = \sum_{\V} p_{\V}^{\U}(\tau)
    \frac{(-1)^\V}{N^{\E-\V}},
  \end{equation}
  where the number of edges $\E$ can be related to the other quantities
  as follows.  First, the permutation $\tau$ is a permutation on $t$
  elements.  The completely untied graph with $2t$ leaves contains $t$
  edges.  Each tying increases the number of edges by 2 and the number
  of internal vertices by one.  Since each of the $\V$ transpositions
  corresponds to a tying, we end up with $\E=t+2\V$ edges.

  The quantity $p_{\V}^{\U}(\tau)$ depends only on the cycle structure
  of $\tau$ so let $c_1,\ldots,c_k$ be the lengths of the cycles.
  Without loss of generality, we take $\tau=(1\,2\ldots
  c_1)\cdots(t-c_k+1\ldots t)$.  Consider the term $(s_1\, r_1)$ on
  the left of a primitive factorization of $\tau$.  Without this term,
  it is also a factorization, but of the permutation $(s_1\, r_1)\tau$
  and into $\V-1$ factors.

  For $s_1=1$, we will now investigate the cycle structure of the
  permutation $(1\, r_1)\tau$.  If $r_1$ belongs to the first cycle of
  $\tau$ (i.e.\ the one that contains 1) it splits into two, of
  lengths $q$ and $r$ with $q+r=c_1$.  Otherwise, if $r_1$ belongs to
  cycle number $j>1$, the first cycle joins with it to form a cycle of
  length $c_1+c_j$.

  Finally, it can happen that $s_1 \neq 1$, but only if the element
  $1$ does not appear in any transposition of the factorization
  (equivalently, the leaf $1$ is attached directly to the leaf
  $\wb{1}$).  This can happen only if $1$ is in cycle of its own
  within the permutation $\tau$, that is if $c_1=1$.  In this case,
  the given primitive factorization is also a factorization of the
  permutation on $t-1$ element with cycle lengths $c_2, \ldots, c_k$.
  Altogether we have
  \begin{equation}
    \label{eq:recur_prim}
    p_{\V}^{\U}(c_1,\ldots,c_k) = \delta_{c_1,1}p_{\V}^{\U}(c_2,\ldots,c_k)
    +\sum_{q+r=c_1}p_{\V-1}^{\U}(q,r,c_2,\ldots,c_k)
    +\sum_{j=2}^{k}c_jp_{\V-1}^{\U}(c_1+c_j,\ldots,\hat{c_j},\ldots).
  \end{equation}
  The notation $\hat{c_j}$ again means that $c_j$ is removed from the
  lengths of the cycles.  We would like to substitute this recursion
  into (\ref{eq:sum_diag_u_prim}) and we need to calculate the power
  of $N^{-1}$ at which the terms of (\ref{eq:recur_prim}) would enter
  the sum in $\Delta^U$.  On the left, we have $\E-\V = (t+2\V) - \V =
  t+\V$.  The first term on the right corresponds to permutations on
  $t-1$ elements, hence the power is $t+\V-1$.  The other two terms on
  the right have the same $t$, but a reduced number of internal
  vertices (factors in the factorization), also resulting in $t+\V-1$.
  Altogether, we get
  \begin{equation}
    N \Delta^{\U}(c_1,\ldots,c_k) = \delta_{c_1,1}\Delta^{\U}(c_2,\ldots,c_k)
    -\sum_{q+r=c_1}\Delta^{\U}(q,r,c_2,\ldots,c_k)
    -\sum_{j=2}^{k}c_j\Delta^{\U}(c_1+c_j,\ldots,\hat{c_j},\ldots),
  \end{equation}
  exactly mirroring the recursion relations for $V_N^{\U}$ given in
  (\ref{eq:recur_CUE}).
\end{proof}

\begin{example}
  \label{ex:contrib_V11}
  Consider the contribution $\Delta^{\U}(\id_2)$, where
  $\id_2$ is the identity permutation on 2 elements.  According to
  Lemma~\ref{lem:structure_of_correlator}, this contribution arises
  for the correlator $\langle S_{12} S_{34} S^*_{12}
  S^*_{34}\rangle$ or as part of many other correlators.
  The random matrix prediction for this is 
  \begin{equation*}
    V^{\U}(\id_2) = \frac1{N^2-1} = \frac{1}{N^2} + \frac{1}{N^4} + \ldots    
  \end{equation*}
  The first term arises from the trajectories $\gamma_1 = \gamma_1'$
  and $\gamma_2=\gamma_2'$, the corresponding diagram consisting of two
  disjoint edges connecting $1$ to $\wb{1}$ and $2$ to $\wb{2}$ correspondingly.

  \begin{figure}[t]
    \includegraphics[scale=0.7]{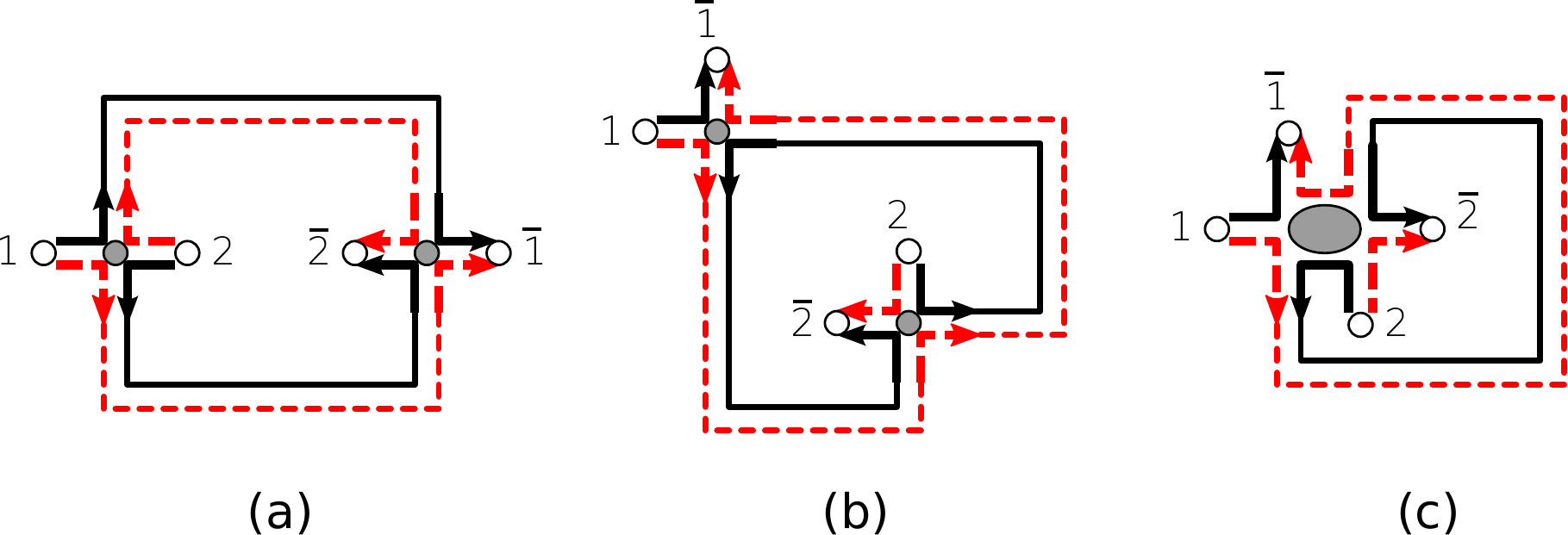}
    \caption{Diagrams contributing to the second term of the
      expansion of $V^{\U}(\id_2)$, see Example~\ref{ex:contrib_V11}.}
    \label{fig:example_contrib_V11}
  \end{figure}

  The diagrams contributing to the second term are depicted in
  Fig.~\ref{fig:example_contrib_V11}.  There are 5 diagrams
  altogether, with the additional two obtained from
  Fig.~\ref{fig:example_contrib_V11}(b) and (c) by a shift in the
  leaf labeling.  For example swapping the labels in 
  Fig.~\ref{fig:example_contrib_V11}(b) gives 
  Fig.~\ref{fig:encounter_physics}(a) and 
  Fig.~\ref{fig:encounter_untwist}(c).  Note that the diagrams in
  Fig.~\ref{fig:example_contrib_V11}(b) and (c) are related by the
  involution $P$ described in the proof of
  Theorem~\ref{thm:sum_diag_u}.  Indeed, in the diagram in part (b) we
  cannot untie any vertices.  We therefore go to step (3) of the
  algorithm and contract the edge going from the top shaded vertex to
  the right.  The result of this edge contraction is the diagram in
  part (c).  The diagram in part (b) has $v=2$ and $e=6$ and therefore
  contributes $(-1)^2/N^{(6-2)}$ to $\Delta^\U(\id_2)$.  The diagram in
  part (c) has $v=1$ and $e=5$ and contributes $-1/N^{5-1}$; these
  contributions cancel.

  The diagram in Fig.~\ref{fig:example_contrib_V11}(a) can be untied
  completely and corresponds to the primitive factorization
  \begin{equation*}
    \id = (1\,2)(1\,2).
  \end{equation*}
  In fact the only primitive factorizations are of the form 
  \begin{equation*}
    \id = (1\,2)^{2n},
  \end{equation*}
  recreating the random matrix prediction.
\end{example}

\begin{example}
  \label{ex:contrib_V2}
  Consider the contribution $\Delta^{\U}((1\,2))$, which is predicted by
  RMT to be
  \begin{equation*}
    \Delta^{\U}((1\,2)) = -\frac1{N(N^2-1)} = -\frac{1}{N^3} - \frac{1}{N^5} - \ldots    
  \end{equation*}

  \begin{figure}[t]
    \includegraphics[scale=0.7]{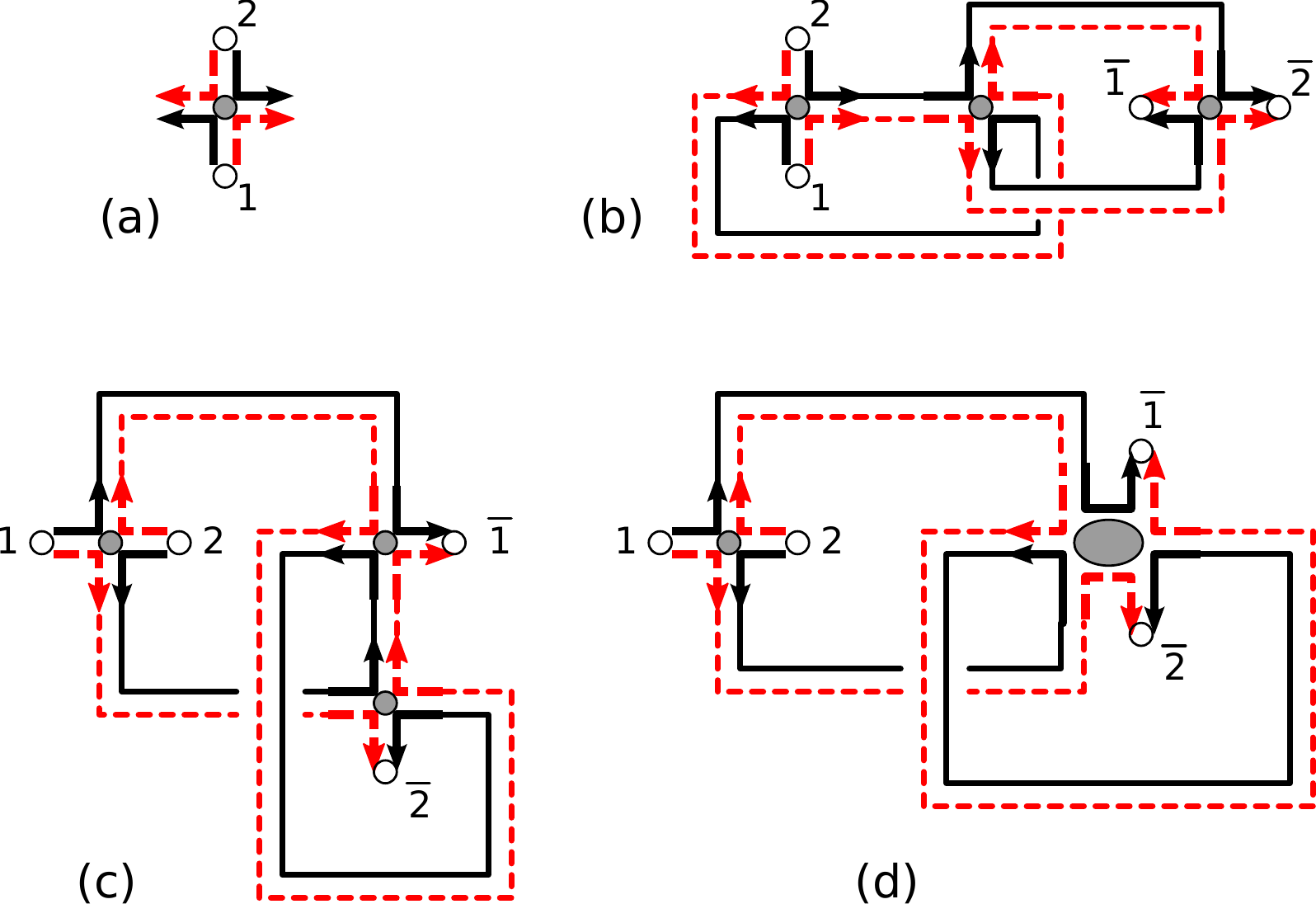}
    \caption{Diagrams contributing to the
      expansion of $V^{\U}((1\,2))$, see Example~\ref{ex:contrib_V2}.  }
    \label{fig:example_contrib_V2}
  \end{figure}

  The first term is produced by the diagram in
  Fig.~\ref{fig:example_contrib_V2}(a) with primitive factorization $(1\,2)$.
  The second term comes from the diagram in Fig.~\ref{fig:example_contrib_V2}(b),
  which corresponds to the factorization
  \begin{equation*}
    (1\,2)= (1\,2)(1\,2)(1\,2).
  \end{equation*}
  All other diagrams at this order (of which
  Figs.~\ref{fig:example_contrib_V2}(c) and (d) are two examples)
  mutually cancel.

  Considering the diagram in part (c) in more detail, we can untie the
  left shaded vertex.  The leaf number 1 is now attached to the top
  right shaded vertex, which cannot be untied.  We therefore contract
  the edge connected the two shaded vertices, tie leaves 1 and 2 back
  and arrive to the diagram in part (d).

  The higher order terms in the random matrix prediction come
  from the unique primitive factorization
  \begin{equation*}
    (1\,2) = (1\,2)^{2n+1}.
  \end{equation*}
\end{example}

While it is possible, in principle, to list all diagrams contributing
to a given moment of conductance fluctuations, a modification of the
method described in the follow-up paper \cite{bk13b} is better suited
to the task.  We stress, however, that our results guarantee that the
semiclassical evaluation will always agree with RMT.

\section{Summation over the orthogonal diagrams}
\label{sec:orthogonal}

Building on the results for systems with broken TRS, we now turn to the case 
of TRS.  We first develop the mathematical description of the ``orthogonal'' 
diagrams before summing their contributions.

\subsection{Orthogonal diagrams as ribbon graphs}

As before, in the description of orthogonal diagarms, variables $j$ or
$k$ refer to the leaf label that does not have the bar
(correspondingly, $\wb{j}$ is a label that does have the bar), while
$z$ denotes a label either with or without the bar.

The conditions that make a valid orthogonal diagram are almost
identical to the unitary case.  The only significant difference is
that trajectories $\gamma$ and $\gamma'$ do not have to run in the
same direction.

\begin{definition}
  \label{def:orth_diagram}
  The \emph{orthogonal diagram} with the target permutation $\tau$ is a
  locally orientable map satisfying the following:
  \begin{enumerate}
  \item There are $t$ leaves labelled with symbols $1,\ldots, t$ and
    $t$ leaves labelled with symbols $\wb{1},\ldots, \wb{t}$.
  \item All other vertices have even degree greater than $2$.
  \item Each leaf is incident to two boundary segments, one of which
    runs between labels $z_1$ and $\wb{z_1}$ and is marked solid, and
    the other runs between labels $\tau(z_2)$ and $\wb{z_2}$ and is
    marked dashed.  Each edge is marked solid on one side and dashed
    on the other.
  \end{enumerate}
\end{definition}

Since $\tau$ does not preserve the two ``halves'' of the set
$Z_t=\{1,\ldots,t, \wb{t}, \ldots, \wb{1}\}$, there can be boundary
segments running between $j$ and $k$ or between $\wb{j}$ and
$\wb{k}$ (see Fig.~\ref{fig:TRtraj_adv} for some examples).  Thus
there is no natural way to assign direction to the boundary segments.
However, when we consider moment generating functions \cite{bk13b} we will 
again have a natural direction and for this reason we
retain the directional arrows in the figures of orthogonal diagrams.

The consequence of dropping the direction requirement in
Definition~\ref{def:orth_diagram} is that we now
need both orientable and (globally) non-orientable diagrams.
Therefore, when drawn on a plane, some edges of the map might have
twists in them: going around a graph on a closed walk can bring you
back on the reverse side of the edge, see Figure~\ref{fig:TRtraj_adv}(b)
for an example.  The last property of Lemma~\ref{lem:unit_prop} still
applies to orthogonal diagrams.  In fact, the face labels fit the pattern
\begin{equation*}
  z,\ \wb{z},\ \tau(z),\ \wb{\tau(z)},\ \tau^2(z),\ \ldots,\
  \wb{\tau^f(z)},
\end{equation*}
\emph{independently} of the direction chosen for the boundary.  This
is because $\tau\left( \wb{\tau(z)} \right) = \wb{z}$ due to
Lemma~\ref{lem:orth_target_prop}.

Given an orthogonal diagram, we can read off the target permutation in
the following fashion.  To determine $\tau(z)$, we find the leaf
labelled $z$ and go from it along the boundary segment marked solid,
until we arrive at the next leaf (which must be marked $\wb{z}$,
according to Def.~\ref{def:unit_diagram}).  From there we follow the
dashed boundary segment.  The leaf we arrive at is the leaf $\tau(z)$.
Note that the same prescription applies to unitary diagrams as well,
if we let $z$ be the labels $\{1,\ldots,t\}$ only.

\subsection{Operations on diagrams}
\label{sec:operations_o}

The two types of operations for unitary diagrams described in Section~\ref{sec:operations_u} 
can be defined for orthogonal diagrams in a similar fashion.

In fact, the operation of contracting an edge is the exactly same.
Namely, if we have a vertex of degree 4 with the leaf labelled $1$
adjacent to it, we can contract the edge that is opposite the leaf $1$.

The generalization of the tying/untying operation is a little more
exciting.  We can tie \emph{any two} leaves together (that is with or
without bars).  Before tying, the edges need to be arranged as in
Fig.~\ref{fig:tying}; this might require adding a twist to one of
them, see Fig.~\ref{fig:tying_TR}.  

\begin{figure}[t]
  \centering
  \includegraphics[scale=0.6]{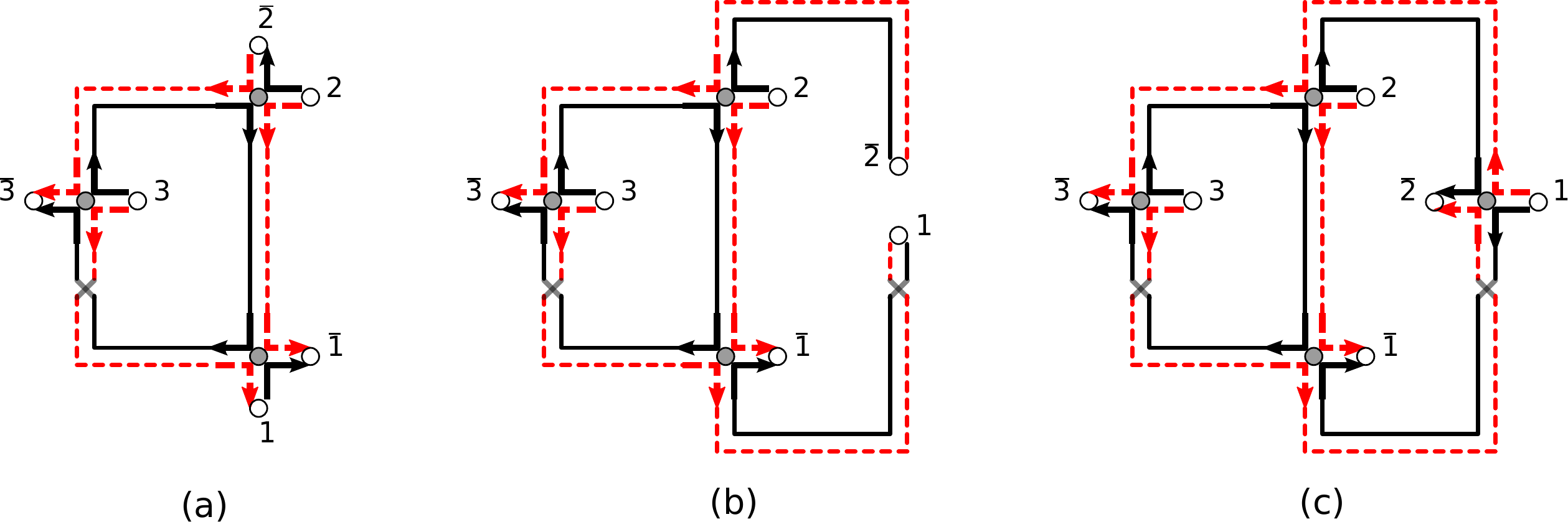}
  \caption{An example of tying two leaves in an orthogonal diagram.
    Part (a) shows the original diagram; leaves $1$ and $\wb{2}$ are to
    be tied.  In part (b) one of the edges received a twist to align
    properly.  The result is shown in part (c).  According to
    Lemma~\ref{lem:tying_target_O}, the target is transformed as
    $(1\,\wb2)(1\,2\,\wb3)(3\,\wb2\,\wb1)(2\,\wb1) = (1\,2\,3)(\wb3\,\wb2\,\wb1)$}
  \label{fig:tying_TR}
\end{figure}

\begin{lemma}
  \label{lem:tying_target_O}
  Consider a diagram with the target permutation $\tau$.  If we tie the
  leaves $z_1$ and $z_2$ together, the target permutation of the
  modified diagram is $(z_1\,z_2)\,\tau\,(\wb{z_1}\,\wb{z_2})$.
  Untying a 4-vertex with leaves $z_1$ and $z_2$ directly attached to
  its opposite sides also results in the target permutation
  $(z_1\,z_2)\,\tau\,(\wb{z_1}\,\wb{z_2})$.
\end{lemma}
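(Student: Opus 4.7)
My strategy is to read $\tau$ off the diagram via two involutions on the set of leaves $Z_t$ and track how each is modified by the tying operation. For a leaf $z$, define $S(z)$ to be the leaf at the other end of the solid boundary segment through $z$, and $D(z)$ the analogous quantity for the dashed segment. By Def.~\ref{def:orth_diagram}(3) we have $S(z)=\wb{z}$, so $S$ coincides with the bar involution on $Z_t$; from the prescription for reading off the target permutation (``start at $z$, go solid to $\wb{z}$, go dashed to $\tau(z)$'') we obtain $\tau = D\circ S$. The symmetry of $\tau$ from Lemma~\ref{lem:orth_target_prop} is exactly the statement that $D$ is an involution on $Z_t$.

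Next I analyze the local effect of tying the leaves $z_1$ and $z_2$. The new $4$-vertex replaces two leaf-terminating edges by four edges: two inherited from the original diagram and two short ones ending in $z_1$ and $z_2$. By the orientation prescribed in Fig.~\ref{fig:tying}, the solid side of each inherited edge is glued to the solid side of the short edge that ends at the \emph{same} leaf, while the dashed side is glued to the dashed side of the short edge ending at the \emph{opposite} leaf. Therefore $S'=S$ and the dashed matching has the leaves $z_1,z_2$ swapped in the pairs containing them: $\{z_1,D(z_1)\}$ and $\{z_2,D(z_2)\}$ become $\{z_2,D(z_1)\}$ and $\{z_1,D(z_2)\}$. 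Equivalently, as permutations of $Z_t$,
\begin{equation*}
  D' \;=\; (z_1\,z_2)\,D\,(z_1\,z_2).
\end{equation*}

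The new target permutation is then $\tau' = D'\circ S = (z_1\,z_2)\,D\,(z_1\,z_2)\,S$. The remaining algebraic step is the identity
\begin{equation*}
  (z_1\,z_2)\,S \;=\; S\,(\wb{z_1}\,\wb{z_2}),
\end{equation*}
which is verified by evaluating both sides at $z_1,z_2,\wb{z_1},\wb{z_2}$ (on every other element they both act as $S$). Substituting this gives $\tau' = (z_1\,z_2)\,D\,S\,(\wb{z_1}\,\wb{z_2}) = (z_1\,z_2)\,\tau\,(\wb{z_1}\,\wb{z_2})$, as claimed. The statement about untying then follows immediately, because untying is inverse to tying and both $(z_1\,z_2)$ and $(\wb{z_1}\,\wb{z_2})$ are involutions, so the inverse map multiplies by the same transpositions on the same sides.

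The main technical point requiring care is the preparatory twist of Fig.~\ref{fig:tying_TR} which may need to be inserted on an inherited edge before the $4$-vertex can be assembled in the orthogonal case; I must check that this auxiliary twist does not already alter $D$ before tying. The resolution is that any such twist may be slid along its edge and placed immediately adjacent to the leaf at which the edge terminates, and at a degree-one vertex there is no cyclic ordering of incident edges to perturb, so the dashed pairing $D$ is genuinely invariant. Degenerate configurations—such as $\wb{z_1}=z_2$ (tying a leaf to its bar-partner) or $D(z_1)=z_2$ (when the two leaves already share a dashed segment)—need to be inspected separately, but the matching manipulation above goes through verbatim and yields the same formula.
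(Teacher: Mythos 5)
Your proof is correct and follows essentially the same route as the paper's: the key observation in both is that tying leaves the solid pairings untouched while exchanging the $z_1$- and $z_2$-ends of the two dashed segments, which the paper translates directly into multiplication by $(z_1\,z_2)$ on the left and by $(\wb{z_1}\,\wb{z_2})$ on the right of $\tau$. Your factorization $\tau = D\circ S$ together with the conjugation $D' = (z_1\,z_2)\,D\,(z_1\,z_2)$ is simply a more algebraic bookkeeping of that same argument, and your treatment of untying, the auxiliary twist, and the degenerate cases matches (indeed slightly exceeds) the paper's level of detail.
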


\begin{proof}
  According to the rules of reading off the target permutation, the
  result of $\tau$ is the endpoint of a dashed boundary segment.
  After tying the leaves, the dashed segment that was finishing at
  $z_1$ now finishes at $z_2$ (see Figure~\ref{fig:tying}) and vice
  versa.  This means that $z_1$ and $z_2$ must be switched after the
  original $\tau$ is applied.

  Similarly, the image of $\wb{z_1}$ is computed by following the
  solid segment to $z_1$ and then the dashed segment to the leaf that
  is the result $\tau\left(\wb{z_1}\right)$.  After the tying
  operation, the dashed segment coming out of $z_1$ is actually the
  segment that was previously coming out of $z_2$ and therefore
  leading to $\tau\left(\wb{z_2}\right)$.  To account for this change,
  we need to switch $\wb{z_1}$ and $\wb{z_2}$ \emph{before} we apply
  $\tau$.

  The untying operation is equivalent to multiplying by the inverses
  of the transpositions, which are the transpositions themselves.
\end{proof}

\begin{remark}
  The operation $(z_1\,z_2)\,\tau\,(\wb{z_1}\,\wb{z_2}) =: \wt{\tau}$
  preserves the properties described in
  Lemma~\ref{lem:orth_target_prop}.  Indeed, according to
  Remark~\ref{rem:Ttau_involution}, we need to show that $\Tt\wt{\tau}$
  consists only of cycles of length 2.  Denoting $(\wb{z_1}\,\wb{z_2})
  =: q$, we have $(z_1\,z_2) = \Tt q\Tt$ and therefore $\Tt\wt{\tau} =
  \Tt\Tt q\Tt\tau q = q \Tt\tau q \sim \Tt\tau$.  Since the operation of
  conjugation does not affect the cycle type, $\Tt\wt{\tau}$ has the
  same cycle lengths as $\Tt\tau$.
\end{remark}

\subsection{Cancellations among the orthogonal diagrams}
\label{sec:cancel_orth}

Similarly to the unitary case, comparing
Lemma~\ref{lem:structure_of_correlator_O} with the RMT result
\eqref{eq:avO}, we see that we need to compare the coefficients
$\Delta^{\OO}(\tau)$ and $V^{\OO}_N(\varpi)$ (where $\tau=\varpi^{-1} \Tt
\varpi \Tt$).  Moreover, according to
Lemma~\ref{lem:even_odd_and_orth_target}, both coefficients in fact
depend on the set of numbers $c_1, \ldots, c_k$, which are the lengths
of half the cycles in the cycle representation of $\tau$ (each cycle has 
its ``mirror'' image in the representation, see Lemma~\ref{lem:orth_target_prop}; 
only one length per pair appears in the list).

The coefficient $\Delta^{\OO}(\tau)$ is the sum of contributions of
orthogonal diagrams with the target permutation $\tau$.  The
contribution of every diagram is evaluated according to the rules in
Definition~\ref{def:Essen_ansatz}. Denoting by $D^{\OO}_{\V,\E}(\tau)$ the
number of orthogonal diagrams with the target permutation $\tau$, $\V$
internal vertices and $\E$ edges, we have the following statement.

\begin{theorem}
  \label{thm:sum_diag_O}
  The total contribution of the orthogonal diagrams with target
  permutation $\tau$ is
  \begin{equation}
    \label{eq:sum_diag_o}
    \Delta^{\OO}(\tau) := \sum_{\V,\E} D^{\OO}_{\V,\E}(\tau)
    \frac{(-1)^\V}{N^{\E-\V}} = V_N^{\OO}(\tau). 
  \end{equation}
\end{theorem}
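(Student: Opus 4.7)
The plan is to mirror the proof of Theorem~\ref{thm:sum_diag_u} as closely as possible, making the necessary adaptations for the additional freedom available in orthogonal diagrams (twists, and tying of leaves of mixed bar/non-bar type). As in the unitary case, I would split the argument into (i) a cancellation step realized by an involution $P$ on the set of diagrams that leaves $(\E-\V)$ fixed while shifting $\V$ by one, and (ii) a recursion step showing that the surviving ``completely untiable'' diagrams satisfy the same recurrence and initial condition as $V_N^{\OO}(\tau)$.

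For the cancellation step I would copy the algorithm of Section~\ref{sec:cancel_unit} almost verbatim: pick the leaf with the smallest label (now from $Z_t$, i.e.\ bars allowed, with some fixed total order such as $1<\wb{1}<2<\wb{2}<\cdots$); repeatedly untie 4-vertices whose opposite side carries a leaf, using Lemma~\ref{lem:tying_target_O} to track the transformation of $\tau$; and, if the minimal leaf sits on a 4-vertex whose opposite edge leads to an internal vertex, contract that edge (respectively split it if the minimal leaf is attached to a vertex of degree $\ge 6$), and then undo all untyings. Lemma~\ref{lem:shrinking_target} adapts without change since contraction/splitting is a local operation not involving the target permutation, so $P$ preserves $\tau$ and swaps diagrams whose $(-1)^\V/N^{\E-\V}$ contributions cancel.

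For the surviving (fully untiable) diagrams, each untying records a transposition pair acting by two-sided conjugation: by Lemma~\ref{lem:tying_target_O}, if we untie leaves $z_j, z_j'$ at step $j$, we obtain a factorization
\begin{equation*}
(z_1,z_1')\,(z_2,z_2')\cdots(z_\V,z_\V')\,\tau\,(\wb{z_\V},\wb{z_\V'})\cdots(\wb{z_2},\wb{z_2'})\,(\wb{z_1},\wb{z_1'}) = {id},
\end{equation*}
where the minimal-leaf rule again forces a monotonicity constraint on the first indices. This is precisely the \emph{primitive palindromic factorization} advertised in the abstract; denote by $p_\V^{\OO}(\tau)$ the number of such factorizations. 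An edge-count identical to the unitary case gives $\E = t+2\V$, so that the surviving contribution is $\sum_\V p_\V^{\OO}(\tau)(-1)^\V N^{-(t+\V)}$. Removing the outermost pair $(z_1,z_1')$ and $(\wb{z_1},\wb{z_1'})$ then yields a smaller palindromic factorization of $(z_1,z_1')\tau(\wb{z_1},\wb{z_1'})$, and classifying the possibilities for $z_1'$ relative to the cycle structure of $\tau$ (using Lemma~\ref{lem:orth_target_prop} and the paired-cycle structure) should produce a recursion for $p_\V^{\OO}$ which, after the substitution into $\Delta^{\OO}$, matches \eqref{eq:recur_COE} exactly. The initial condition $\Delta^{\OO}(\emptyset)=1$ is immediate from the empty diagram.

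The main obstacle, and the place where the proof genuinely differs from the unitary one, is the precise bookkeeping in the recursion: the COE recurrence \eqref{eq:recur_COE} has coefficient $(N+c_1)$ rather than $N$ and a factor $2$ on the third sum, neither of which appears in \eqref{eq:recur_CUE}. I expect the $+c_1$ to arise from palindromic factorizations whose outermost conjugator $(z_1,z_1')$ moves the minimal leaf within its own $\tau$-cycle of length $c_1$ \emph{without} splitting it (i.e.\ $z_1'$ is a barred element of that same cycle), a move that has no unitary analogue since the unitary $\sigma^{-1}\pi$ acts on the unbarred indices only. The factor $2$ in the cycle-merging term should similarly track the two possible choices for whether $z_1'$ is barred or unbarred in a different cycle, corresponding combinatorially to the two possible relative orientations (twisted vs.\ untwisted) introduced when the tying bridges two palindromic cycles. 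Verifying that these geometric choices distribute correctly between the three right-hand terms of \eqref{eq:recur_COE}, and that non-orientability of the ribbon graph is compatible with the involution $P$ (in particular that $P$ never needs an ambient orientation to be well-defined), is the technical heart of the argument.
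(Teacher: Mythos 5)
Your skeleton is exactly the paper's: the same minimal-leaf untie/contract/split involution $P$ (with the same bookkeeping $\E=t+2\V$), surviving diagrams in bijection with palindromic primitive factorizations $\tau=(s_1\,r_1)\cdots(s_\V\,r_\V)(\wb{s_\V}\,\wb{r_\V})\cdots(\wb{s_1}\,\wb{r_1})$ via Lemma~\ref{lem:tying_target_O}, and then a recursion for their number $p^{\OO}_\V$ matching \eqref{eq:recur_COE}. The cancellation half of your argument is fine and needs no changes.

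The gap is in the step you explicitly defer, and your provisional case analysis there is wrong as stated. Cycles of an orthogonal target permutation generally mix barred and unbarred labels (e.g.\ $\tau=(1\,\wb{3}\,\wb{2})(2\,3\,\wb{1})$ in Example~\ref{ex:TR_corr}), so ``$z_1'$ barred vs.\ unbarred'' is not the relevant dichotomy. If $r_1$ lies in the \emph{same} cycle as $s_1$ (barred or not), then $(s_1\,r_1)\,\tau\,(\wb{s_1}\,\wb{r_1})$ splits that cycle, and simultaneously the right factor splits its mirror, so this case feeds the $\sum_{q+r=c_1}$ term --- not the $+c_1$ term you assign it to. The $+c_1$ comes instead from $r_1$ lying in the \emph{mirror partner} of the cycle containing $s_1$ (with $r_1=\wb{s_1}$ allowed): there the left transposition merges the mirror pair and the right one re-splits it, and the explicit computation in the paper shows the cycle type is unchanged, with exactly $c_1$ choices for $r_1$. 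The factor $2c_j$ in the merging term arises because $r_1$ may sit in either member of a different mirror pair (which is indeed the twisted/untwisted tying you anticipate), not because of a bar on $r_1$ itself. Together with the separate case $s_1\neq\min Z_t$ (possible only when $c_1=1$), this yields the paper's recursion \eqref{eq:recur_orth}, whose generating function in $1/N$ reproduces \eqref{eq:recur_COE}. So your route is the right one, but to close the proof you must carry out this cycle-pair classification correctly rather than the barred/unbarred split you propose.
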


\begin{proof}
  The cancellation algorithm is completely analogous to the unitary
  case.  Namely, we fix a linear ordering of the set $Z_t$ and
  untie, while it is possible, the vertex adjacent to the leaf with
  the minimal label of the leaves still present in the diagram.  We
  remove any edges that directly connect two leaves from the consideration.

  If at any point it becomes impossible to untie the vertex adjacent
  to the minimal leaf, it is either because the vertex is of a degree
  higher than 4 or because the edge opposite the minimal leaf is not
  going to a leaf.  Then we correspondingly split the vertex or
  contract the edge.  After this we re-tie previously untied vertices.
  The new diagram has the contribution that cancels the contribution
  of the original diagram.

  The only diagrams that survive this process are those that untie to
  an empty diagram.  Recording every step according to
  Lemma~\ref{lem:tying_target_O}, we get
  \begin{equation}
    \label{eq:untied_O}
    (s_\V\,r_\V) \cdots (s_2\,r_2)\,(s_1\,r_1) \, \tau 
    \, (\wb{s_1}\,\wb{r_1}) \, (\wb{s_2}\,\wb{r_2}) 
    \cdots (\wb{s_\V}\,\wb{r_\V}) = id,
  \end{equation}
  where $s_j < r_j$ and $s_j \leq s_{j+1}$.  Note that the choice of
  the node to untie is unique at each step, so the surviving diagrams
  are in one-to-one correspondence with the ``palindromic'' primitive
  factorizations
  \begin{equation}
    \label{eq:prim_fact_O}
    \tau = (s_1\,r_1) \, (s_2\,r_2) \cdots (s_\V\,r_\V)
    (\wb{s_\V}\,\wb{r_\V}) \cdots (\wb{s_2}\,\wb{r_2}) \,
    (\wb{s_1}\,\wb{r_1}), 
    \qquad s_j < r_j, \quad s_j \leq s_{j+1}.
  \end{equation}
  We now need to understand the number $p^{\OO}_\V(\tau) =
  p^{\OO}_\V(c_1,\ldots, c_k)$ of such factorizations which provide us 
  with the semiclassical contribution.
  \begin{equation}
    \Delta^{\OO}(\tau) := \sum_{\V} p^{\OO}_{\V}(\tau)
    \frac{(-1)^\V}{N^{\E-\V}} . 
  \end{equation}
  To do this, we derive a recursion similar to \eqref{eq:recur_prim}.  The
  number of factorizations of $\tau$ with $2\V$ factors is equal to
  the number of factorization of $(s_1\, r_1)\, \tau \,
  (\wb{s_1}\,\wb{r_1})$ with $2(\V-1)$ factors, summed over all
  possible choices of $(s_1\, r_1)$.  As before, we treat the case $s_1 \neq
  \min(Z_t)$ separately ($s_1$ and $\wb{s_1}$ must then be cycles
  of their own).  
  
  Without loss of generality we assume that $s_1$ appears in cycle
  number 1.  There are $2c_j$ possibilities for $r_1$ to appear in the
  cycle number $j>1$ or its ``mirror'' image, cycle number $2k-j+1$.
  This cycle joins the first cycle in the result of the multiplication
  $(s_1\, r_1)\, \tau \, (\wb{s_1}\,\wb{r_1})$.  If $r_1$ belongs to
  the first cycle, it splits into two parts.  Finally, if $r_1$
  belongs to the mirror image of the first cycle, the product is
  \begin{equation*}
    (s_1\, r_1) \, (s_1\, a \ldots b\, \wb{r_1}\, c \ldots d)
    (\wb{d} \ldots \wb{c}\, r_1\, \wb{b} \ldots \wb{a}\, \wb{s_1})
    \, (\wb{r_1}\, \wb{s_1})
    = (s_1\, a \ldots b\, \wb{r_1}\,\wb{d} \ldots \wb{c}) (c \ldots
    d\, r_1\, \wb{b} \ldots \wb{a}\, \wb{s_1}),
  \end{equation*}
  therefore the cycle lengths do not change.  In the latter case,
  $r_1$ has $c_1$ possibilities, including $\wb{s_1}$.

  The cases above give rise to the following terms,
  \begin{multline}
    \label{eq:recur_orth}
    p^{\OO}_\V(c_1,\ldots,c_k) = \delta_{c_1,1} p^{\OO}_\V(c_2,\ldots,c_k)
    + \sum_{j} 2c_j p^{\OO}_{\V-1}(c_1+c_j,\ldots,\hat{c_j},\ldots) \\
    + \sum_{q+r=c_1}p^{\OO}_{\V-1}(q,r,c_2,\ldots,c_k)
    + c_1 p^{\OO}_{\V-1}(c_1,\ldots,c_k).
  \end{multline}
  Taking the generating function with respect to $1/N$ according to
  the middle expression in~(\ref{eq:sum_diag_o}), we recover
  recursion~(\ref{eq:recur_COE}), which completes the proof.
\end{proof}

\section{Conclusions}
\label{sec:conclusions}

While the RMT approach described in Sec.~\ref{sec:RMT} in terms of the
recursive class coefficients is computationally inefficient, it turned
out to be useful for establishing the equivalence between semiclassics
and RMT to all orders of $1/N$.  The equivalence is established at the
level of moments of scattering matrix elements which immediately
implies equivalence of moments (both linear and non-linear) of matrix
subblocks.  The result is proved for moments of COE and CUE
(corresp.\ with and without time-reversal symmetry), but, since there
is a simple formula \cite{bb96} connecting COE and CSE moments, the equivalence
extends immediately to CSE moments (corresp.\ systems with spin-orbit
interactions) as well.

Furthermore, since all the moments agree, indirectly we obtain the
joint probability density of the transmission eigenvalues of
$\bs{t}^\dagger \bs{t}$ semiclassically.  In fact, current RMT
approaches start from this joint probability density (which follows
the Jacobi ensemble \cite{beenakker97,forrester06}) and obtain linear
and non-linear transport moments through a variety of different
methods
\cite{kss09,lv11,ms11,ms13,novaes08,ok08,ok09,ss06,ssw08,vv08}.

Once the semiclassical diagrams have been expressed in terms of ribbon
graphs, the most important step is showing that the contributions of
the vast majority of diagrams cancel. We identified pairs of diagrams
whose contributions cancel exactly, leaving only the diagrams that
correspond to primitive factorisations. These were then shown to match
the RMT class coefficients. While possibly the simplest cancellation
algorithm, it is certainly not unique. In fact, in an earlier version
of our proof we considered a cancellation algorithm which reduced the
set of diagrams (with broken TRS) to those that can be put into
correspondence with inequivalent factorisations \cite{BerIrv_prep}
(those are the fully untieable diagrams, when more general untyings
are allowed).

The cancellation we used relies strongly on the product of the
semiclassical edge and vertex contribution from
Definition~\ref{def:Essen_ansatz} being exactly $-1$. This is the case
for the correlators of the subblocks of the scattering matrix we
considered here, but no longer holds for energy-dependent correlators,
superconducting or tunneling leads, Wigner delay times and other
physically relevant questions. In these cases semiclassical results
are known for low moments or up to a given order in inverse channel
number \cite{bk10,bk11,kuipersetal10,kuipersetal11,kr13} and agree
with perturbative RMT expansions
\cite{bmb95,bb96,bfb99,melsenetal96,melsenetal97,ms12,sss01} but a
general proof of the equivalence is lacking. For the Wigner delay
times, the transport matrix follows an inverse Wishart distribution
\cite{bfb99} and correlators of such matrices have been expressed
\cite{matsumoto12} in a form similar to the correlators in
Sec.~\ref{sec:RMT}. A strategy similar to the one used here might then
be fruitful. Starting from the joint probability density of the delay
times, the linear moments, as well as the moments of the mean time
delay, are also available from RMT \cite{lv11,ms11,ms13}.

Another case where the edge and vertex contributions do not allow
direct cancellation is when we consider Ehrenfest time effects. Below
the Ehrenfest time the quantum and classical propagation are fairly
similar while for longer times wave interference dominates. This wave
interference is incorporated in the semiclassical approximation on top
of the underlying classical motion which can then be used to obtain
the typical dependence of quantum transport on the Ehrenfest time. In
particular results are known for low orders in a perturbative
expansion in the inverse channel number for low moments
\cite{br06,br06b,jw06,petitjeanetal09,wk10,waltneretal12,whitney07,wj06}
and for all moments at leading order \cite{wkr11}. Although this
question is outside the range of applicability of RMT, attempts have
been made to phenomenologically treat Ehrenfest time effects using
``effective'' RMT \cite{sgb03,sgb03b}.  This approach provides the correct results for
some quantities but importantly not for all (notably the weak
localisation correction to the conductance) and its validity must then
be checked against semiclassical approaches \cite{br06b,jw06,wkr11}.  It is here, where RMT
answers are not available in principle, that our classification of the
semiclassical diagrams (continued in \cite{bk13b}) can become very
useful.

\section*{Acknowledgements}

Illuminating discussions with J.~Irving and K.~Richter are gratefully
acknowledged.  GB is partially supported by the NSF grant DMS-0907968 
and JK by the DFG through FOR 760.


\bibliographystyle{abbrv}
\bibliography{ctsetm}

\end{document}